\newtheorem{theorem}{Theorem}
\newtheorem{lemma}{Lemma}
\newtheorem{definition}{Definition}
\begin{document}

% paper title
\title{Generalized Silver Codes}
\author{
\authorblockN{K. Pavan Srinath and B. Sundar Rajan,\\}
\authorblockA{Dept of ECE, Indian Institute of science, \\
Bangalore 560012, India\\
Email:\{pavan,bsrajan\}@ece.iisc.ernet.in\\
}
}
% make the title area
\maketitle
\vspace{-15mm}
\begin{abstract}
For an $n_t$ transmit, $n_r$ receive antenna system ($n_t \times n_r$ system), a {\it{full-rate}} space time block code (STBC) transmits $n_{min} = min(n_t,n_r)$ complex symbols per channel use. The well known Golden code is an example of a full-rate, full-diversity STBC for 2 transmit antennas. Its ML-decoding complexity is of the order of $M^{2.5}$ for square $M$-QAM. The Silver code for 2 transmit antennas has all the desirable properties of the Golden code except its coding gain, but offers lower ML-decoding complexity of the order of $M^2$. Importantly, the slight loss in coding gain is negligible compared to the advantage it offers in terms of lowering the ML-decoding complexity. For higher number of transmit antennas, the best known codes are the Perfect codes, which are full-rate, full-diversity, information lossless codes (for $n_r \geq n_t$) but have a high ML-decoding complexity of the order of $M^{n_tn_{min}}$ (for $n_r < n_t$, the punctured Perfect codes are considered). In this paper\footnote[1]{Part of the content of this manuscript has been presented at IEEE ISIT 2010 and another part at IEEE Globecom, 2010.}, a scheme to obtain full-rate STBCs for $2^a$ transmit antennas and any $n_r$ with reduced ML-decoding complexity of the order of $M^{n_t(n_{min}-\frac{3}{4})-0.5}$, is presented. The codes constructed are also information lossless for $n_r \geq n_t$, like the Perfect codes and allow higher mutual information than the comparable punctured Perfect codes for $n_r < n_t$. These codes are referred to as the {\it generalized Silver codes}, since they enjoy the same desirable properties as the comparable Perfect codes (except possibly the coding gain) with lower ML-decoding complexity, analogous to the Silver-Golden codes for 2 transmit antennas. Simulation results of the symbol error rates for 4 and 8 transmit antennas show that the generalized Silver codes match the punctured Perfect codes in error performance while offering lower ML-decoding complexity.
\end{abstract}

\begin{keywords}
Anticommuting matrices, ergodic capacity, full-rate space-time block codes, low ML-decoding complexity, information losslessness.
\end{keywords}
\hrule 

\section{Introduction and Background}
Complex orthogonal designs (CODs) \cite{TJC}, \cite{TiH}, although provide linear Maximum Likelihood (ML)-decoding, do not offer a high rate of transmission. A full-rate code for an $n_t \times n_r$ MIMO system transmits $min(n_t,n_r)$ independent complex symbols per channel use. Among the CODs, only the Alamouti code for 2 transmit antennas is full-rate for a $2 \times 1$ MIMO system. A full-rate STBC can efficiently utilize all the degrees of freedom the channel provides. In general, an increase in the rate tends to result in an increase in the ML-decoding complexity. The Golden code \cite{BRV} for 2 transmit antennas is an example of a full-rate STBC for any number of receive antennas. Until recently, the ML-decoding complexity of the Golden code was reported to be of the order of $M^4$, where $M$ is the size of the signal constellation. However, it was shown in \cite{SrR_arxiv}, \cite{john_barry1} that the Golden code has a decoding complexity of the order of $M^{2.5}$ for square $M$-QAM. Current research focuses on obtaining high rate codes with {\it reduced ML-decoding complexity} (refer to Sec. \ref{sec2} for a formal definition). For 2 transmit antennas, the Silver code, named so in \cite{Hollanti_silver},  was first mentioned in \cite{tirk_combined} and independently presented in \cite{PGA} along with a study of its low ML-decoding complexity property. It is a full-rate code with full-diversity and an ML-decoding complexity of the order of $M^2$ for square $M$-QAM. Its algebraic properties have been studied in \cite{Hollanti_silver} and \cite{Ray_silver} and a fixed point fast decoding scheme has been given in \cite{fixed_point_silver}. For 4 transmit antennas, Biglieri et. al. proposed a rate-2 STBC which has an ML-decoding complexity of the order of $M^{4.5}$ for square $M$-QAM without full-diversity \cite{BHV}. It was, however, shown that there was no significant reduction in error performance at low to medium SNR when compared with the previously best known code - the DjABBA code \cite{HTW}. This code was obtained by multiplexing Quasi-orthogonal designs (QOD) for 4 transmit antennas \cite{JH}. In \cite{SrR_arxiv}, a new full-rate STBC for $4\times 2$ system with an ML-decoding complexity of $M^{4.5}$ was proposed and was conjectured to have the non-vanishing determinant (NVD) property. This code was obtained by multiplexing the coordinate interleaved orthogonal designs (CIODs) for 4 transmit antennas \cite{ZS}. These results show that codes obtained by multiplexing low complexity STBCs can result in high rate STBCs with reduced ML-decoding complexity and by choosing a suitable constellation, there won't be any significant degradation in the error performance when compared with the best existing STBCs. Such an approach has also been adopted in \cite{Robert} to obtain high rate codes\footnote{Fast decodable STBCs have been constructed in \cite{fgd1}-\cite{fgd3}, but these codes are not full-rate in general, and make use of near ML-decoding algorithms.} from multiplexed orthogonal designs. More recently, full-rate STBCs with an ML-decoding complexity of the order of $M^{5.5}$ and a provable NVD property for the $4\times2$ system have been proposed in \cite{Oggier_fast} and \cite{Oggier_spcom}.

In general, it is not known how one can design full-rate STBCs for an arbitrary number of transmit and receive antennas with reduced ML-decoding complexity. It is well known that the maximum mutual information achievable with an STBC is at best equal to the ergodic capacity of the MIMO channel, in which case the STBC is said to be {\it information lossless} (see Section \ref{sec2} for a formal definition). It is known how to design information lossless codes \cite{damen_info} for the case where $n_r \geq n_t$. However, when $n_r < n_t$ the only known code in literature which is information lossless is the Alamouti code, which is information lossless for the $ 2 \times 1 $ system alone. It has been shown in \cite{tirk_combined}, \cite{tirk_expansion} and \cite{HTW} that when $n_r < n_t$, {\it self-interference} of the STBC (a formal definition of self interference is given in Section \ref{sec2}) has to be minimized for maximizing the mutual information achieved with the STBC. Not much research\footnote{The full-rate STBCs in \cite{hollanti1}, designed for $n_r < n_t$, are not linear dispersion codes. They are based on maximal orders and use spherical shaping due to which the encoding and decoding complexity is extremely high. The STBCs in \cite{hollanti2}, also designed for $n_r < n_t$, use the concept of restricting the number of active transmit antennas to be no larger than the number of receive antennas, and so, the mutual information analysis for these codes is very difficult. These STBCs are diversity-multiplexing gain tradeoff (DMT) optimal but are associated with a very high ML-decoding complexity.} has been done on designing codes that allow a high mutual information when $n_r < n_t$. In this paper, for $n_t = 2^a$, we systematically design full-rate STBCs which have the least possible self-interference and the lowest ML-decoding complexity among known full-rate STBCs for $n_r < n_t$ and consequently allow higher mutual information than the best existing codes (the Perfect codes with puncturing \cite{ORBV}, \cite{new_per}), while for $n_r \geq n_t$, the proposed STBCs are information lossless like the comparable Perfect codes. We call these codes the {\it generalized Silver codes} since, analogous to the silver code and the Golden code for 2 transmit antennas, the proposed codes have every desirable property that the Perfect codes have, except the coding gain, but importantly, have lower ML-decoding complexity. The contributions of the paper are:

\begin{enumerate}
\item We give a scheme to obtain rate-1, 4-group decodable codes (refer Section \ref{sec2} for a formal definition of multi-group decodable codes) for $n_t = 2^a$ through algebraic methods. The speciality of the obtained design is that it is amenable for extension to higher number of receive antennas, resulting in full-rate codes with reduced ML-decoding complexity for any number of receive antennas, unlike the previous constructions \cite{4gp1}-\cite{sanjay} of rate-1, 4-group decodable codes.
\item Using the rate-1, 4-group decodable codes thus constructed, we propose a scheme to obtain the generalized Silver codes, which are full-rate codes with reduced ML-decoding complexity for $2^a$ transmit antennas and any number of receive antennas. These codes also have the least self-interference among known comparable STBCs and allow higher mutual information with lower ML-decoding complexity than the comparable punctured Perfect codes for the case $n_r < n_t$, while being information lossless for $n_r \geq n_t$. In terms of error performance, by choosing the signal constellation carefully, the proposed codes have more or less the same performance as the corresponding punctured Perfect codes. This is shown through simulation results for 4 and 8 transmit antenna systems.
\end{enumerate}

The paper is organized as follows. In Section \ref{sec2}, we present the system model and the relevant definitions. The criteria for maximizing the mutual information with space time modulation are presented in Section \ref{sec3} and our method to construct rate-1, 4-group decodable codes is proposed in Section \ref{sec4}. The scheme to extend these codes to obtain the generalized Silver codes for higher number of receive antennas is presented in Section \ref{sec5}. Simulation results are discussed in Section \ref{sec6} and the concluding remarks are made in Section \ref{sec7}.

{\it Notations:} Throughout, bold, lowercase letters are used to denote vectors and bold, uppercase letters are used to denote matrices. Let $\textbf{X}$ be a complex matrix. Then, $\textbf{X}^{H}$ and $\textbf{X}^{T}$ denote the Hermitian and the transpose of $\textbf{X}$, respectively and unless used to denote indices or subscripts, $j$ represents $\sqrt{-1}$. The $(i,j)^{th}$ entry of $\textbf{X}$ is denoted by $\textbf{X}(i,j)$ while $tr(\textbf{X})$ and $det(\textbf{X})$ denote the trace and determinant of $\textbf{X}$, respectively. The set of all real and complex numbers are denoted by $\mathbb{R}$ and $\mathbb{C}$, respectively. The real and the imaginary part of a complex number $x$ are denoted by $x_I$ and $x_Q$, respectively. $\Vert \textbf{X} \Vert$ denotes the Frobenius norm of $\textbf{X}$, $\Vert \textbf{x} \Vert$ denotes the vector norm of a vector $\textbf{x}$, and $\textbf{I}_T$ and $\textbf{O}_T$ denote the $T\times T$ identity matrix and the null matrix, respectively. The Kronecker product is denoted by $\otimes$ and $vec(\textbf{X})$ denotes the concatenation of the columns of $\textbf{X}$ one below the other. For a complex random variable $X$, $\mathbb{E}[X]$ denotes the mean of $X$ and $\mathbb{E}_{X}\left(f(X)\right)$ denotes the mean of $f(X)$, a function of the random variable $X$. The inner product of two vectors $\textbf{x}$ and $\textbf{y}$ is denoted by $\langle \textbf{x},\textbf{y} \rangle$. For a set $\mathcal{S}$, $a\mathcal{S} \triangleq \{as | s \in \mathcal{S} \}$. Let $\mathcal{P}$ and $\mathcal{Q}$ be two sets such that $\mathcal{P} \supset \mathcal{Q}$. Then $\mathcal{P} \setminus \mathcal{Q}$ denotes the set of elements of $\mathcal{P}$ excluding the elements of $\mathcal{Q}$. For a complex variable $x$, the $\check{(.)}$ operator acting on $x$ is defined as
\begin{equation*}
\check{x} \triangleq \left[ \begin{array}{rr}
                             x_I & -x_Q \\
                             x_Q & x_I \\
                            \end{array}\right].
\end{equation*}
The $\check{(.)}$ can similarly be applied to any matrix $\textbf{X} \in \mathbb{C}^{n \times m}$ by replacing each entry $x_{ij}$ with $\check{x}_{ij}$, $i=1,2,\cdots, n, j = 1,2,\cdots,m$, resulting in a matrix denoted by $\check{\textbf{X}} \in \mathbb{R}^{2n \times 2m}$. Given a complex vector $\textbf{x} = [ x_1, x_2, \cdots, x_n ]^T$, $\tilde{\textbf{x}}$ is defined as $\tilde{\textbf{x}} \triangleq [ x_{1I},x_{1Q}, \cdots, x_{nI}, x_{nQ} ]^T$. It follows that for $\textbf{A} \in \mathbb{C}^{m \times n}$, $ \textbf{B} \in \mathbb{C}^{n  \times p}$ and $\textbf{C} = \textbf{AB}$, the equalities $ \check{\textbf{C}} =  \check{\textbf{A}}\check{\textbf{B}}$ and $\widetilde{vec(\textbf{C})}  =  (\textbf{I}_p \otimes \check{\textbf{A}})\widetilde{vec(\textbf{B})}$ hold.

\section{System Model}
\label{sec2}
We consider the Rayleigh block fading MIMO channel with full channel state information (CSI) at the receiver but not at the transmitter. For $n_t \times n_r$ MIMO transmission, we have
\begin{equation}\label{Y}
\textbf{Y} = \sqrt{\frac{SNR}{n_t}}\textbf{HS + N},
\end{equation}

\noindent where $\textbf{S} \in \mathbb{C}^{n_t \times T}$ is the codeword matrix whose average energy is given by $\mathbb{E}(\Vert \textbf{S} \Vert^2) = n_tT$, $\textbf{N} \in \mathbb{C}^{n_r \times T}$ is a complex white Gaussian noise matrix with i.i.d. entries $\sim
\mathcal{N}_{\mathbb{C}}\left(0,1\right)$ (complex normal distribution with zero mean and unit variance), $\textbf{H} \in \mathbb{C}^{n_r\times n_t}$ is the channel matrix with the entries assumed to be i.i.d. circularly symmetric Gaussian random variables $\sim \mathcal{N}_\mathbb{C}\left(0,1\right)$, $\textbf{Y} \in \mathbb{C}^{n_r \times T}$ is the received matrix and $SNR$ is the signal-to-noise ratio at each receive antenna.

\begin{definition}\label{def1}({\it Code rate}) Code rate is the average number of independent information symbols transmitted per channel use. If there are $k$ independent complex information symbols (or $2k$ real information symbols) in the codeword which are transmitted over $T$ channel uses, then, the code rate is $k/T$ complex symbols per channel use ($2k/T$ real symbols per channel use).
\end{definition}

\begin{definition}\label{def2}({\it Full-rate STBCs}) For an $n_t \times n_r$ MIMO system, if the code rate is $min\left(n_t,n_r\right)$ complex symbols per channel use, then the STBC is said to be \emph{{full-rate}}.
\end{definition}

 Assuming ML-decoding, the metric that is to be minimized over all possible values of codewords $\textbf{S}$ is given by
 \begin{equation*}
 \textbf{M}\left(\textbf{S}\right) = \left \Vert \textbf{Y} - \sqrt{\frac{SNR}{n_t}}{}\textbf{HS} \right \Vert^2.
 \end{equation*}

\begin{definition}\label{def3}({\it ML-Decoding complexity})
The ML decoding complexity is measured in terms of the maximum number of symbols that need to be jointly decoded in minimizing the ML decoding metric.
\end{definition}
For example, if the codeword transmits $k$ independent symbols of which a maximum of $p$ symbols need to be jointly decoded, the ML-decoding complexity is of the order of $M^{p}$, where $M$ is the size of the signal constellation. If the code has an ML-decoding complexity of order less than $M^k$, the code is said to have \emph{{reduced ML-decoding}} complexity.

\begin{definition}\label{def4}({\it Generator matrix}) For any STBC that encodes $2k$ real symbols (or $k$ complex information symbols), the {\emph{generator}} matrix $\textbf{G} \in \mathbb{R}^{2Tn_t \times 2k}$ is defined by \cite{BHV}
\begin{equation*}
\widetilde{vec\left(\textbf{S}\right)} = \textbf{G} \textbf{s},
\end{equation*}
\noindent where $\textbf{S}$ is the codeword matrix, $\textbf{s} \triangleq \left[ s_1, s_2,\cdots,s_{2k} \right]^T$ is the real information symbol vector.
\end{definition}

A codeword matrix of an STBC can be expressed in terms of \emph{weight matrices} (linear dispersion matrices) \cite{HaH} as
\begin{equation*}
\textbf{S} = \sum_{i=1}^{2k}s_{i}\textbf{A}_{i} .
\end{equation*}
Here, $\textbf{A}_i,i=1,2,\cdots,2k$, are the complex weight matrices of the STBC and should form a {\it linearly independent} set over $\mathbb{R}$. It follows that
\begin{equation*}
\textbf{G} = \left[\widetilde{vec(\textbf{A}_1)}\ \widetilde{vec(\textbf{A}_2)}\ \cdots \ \widetilde{vec(\textbf{A}_{2k})}\right].
\end{equation*}

Due to the constraint that $\mathbb{E}\left(\Vert \textbf{S} \Vert^2 \right) = n_tT$, we have, $\sum_{i=1}^{2k}\mathbb{E}(s_i)^2tr\left(\textbf{A}_i\textbf{A}_i^H\right) = n_tT$. Choosing $\mathbb{E}(s_i)^2 = 1/2$ for all $i=1,\cdots,2k$, we have 
\begin{equation}\label{constraint}
\sum_{i=1}^{2k}tr\left(\textbf{A}_i\textbf{A}_i^H\right) = 2n_tT.
\end{equation}

\begin{definition}\label{def5}({\it Multi-group decodable STBCs}) An STBC is said to be $g$-group decodable \cite{sanjay} if its weight matrices can be separated into $g$ groups $\mathcal{G}_1$, $\mathcal{G}_2$, $\cdots$, $\mathcal{G}_g$ such that
\begin{equation*}
\textbf{A}_i\textbf{A}_j^H + \textbf{A}_j\textbf{A}_i^H = \textbf{O}_{n_t}, ~~~~ \textbf{A}_i \in \mathcal{G}_l, ~~~ \textbf{A}_j \in \mathcal{G}_p, ~~~ l,p \in \{1,2,\cdots,g\}, ~~ l \neq p.
\end{equation*}
\end{definition}

\begin{definition}\label{def8}({\it Self-interference}) For an STBC given by $\textbf{S} = \sum_{i=1}^{2k}s_{i}\textbf{A}_{i}$, the self-interference matrix \cite{HTW} is defined as 
\begin{equation*}
 \textbf{S}^{int} = \sum_{i=1}^{2k-1} \sum_{j>i}^{2k}s_is_j\left(\textbf{A}_i\textbf{A}_j^H + \textbf{A}_j\textbf{A}_i^H \right).
\end{equation*}
\end{definition}

\begin{definition}\label{def7}({\it Punctured Codes}) Punctured STBCs are the codes with some of the symbols being zeros, in order to meet the full-rate criterion.
\end{definition}

For example, a codeword of the Perfect code for 4 transmit antennas \cite{ORBV} transmits sixteen complex symbols in four channel uses and has a rate of 4 complex symbols per channel use. If this code were to be used for a two receive antenna system which can only support a rate of two independent complex symbols per channel use, then, eight symbols of the Perfect code can be made zeros, so that the codeword transmits eight complex symbols in four channel uses. These eight symbols correspond to the two layers \cite{ORBV} of the Perfect code.

Equation \eqref{Y} can be rewritten as
\begin{equation}\label{eq_model}
 \widetilde{vec(\textbf{Y})} = \sqrt{\frac{SNR}{n_t}}\textbf{H}_{eq}\textbf{s} + \widetilde{vec(\textbf{N})},
\end{equation}
\noindent where $\textbf{H}_{eq} \in \mathbb{R}^{2n_rT\times 2n_{min}T}$, called the { \it equivalent channel matrix}. is given by $\textbf{H}_{eq} = \left(\textbf{I}_T \otimes \check{\textbf{H}}\right)\textbf{G}$, with $\textbf{G} \in \mathbb{R}^{2n_tT\times 2n_{min}T}$ being the generator matrix as in Definition \ref{def4}.

\begin{definition}\label{def6}({\it Ergodic capacity}) The ergodic capacity of an $n_t \times n_r$ MIMO channel is \cite{tel}
\begin{equation*}
 \mathcal{C}_{n_t \times n_r} = \mathbb{E}_{\textbf{H}}\left(log~det\left(\textbf{I}_{n_r} + \frac{SNR}{n_t}\textbf{H}\textbf{H}^H\right)\right).
\end{equation*}
\end{definition}

With the use of an STBC, the maximum mutual information achievable is \cite{JJK}
\begin{equation*}
 \mathcal{I}_{STBC} = \frac{1}{2T}\mathbb{E}_{\textbf{H}}\left(log~det\left(\textbf{I}_{2n_rT} + \frac{SNR}{n_t}\textbf{H}_{eq}\textbf{H}_{eq}^T\right)\right).
\end{equation*}

It is known that $\mathcal{C}_{n_t \times n_r} \geq \mathcal{I}_{STBC}$. If $\mathcal{C}_{n_t \times n_r} = \mathcal{I}_{STBC}$, the STBC is said to be {\it information lossless}. If the generator matrix $\textbf{G}$ is orthogonal (from Definition \ref{def4}, this case arises only if $n_r \geq n_t$ and the STBC is full-rate, i.e, $k = n_tT$), the STBC is information lossless.

\section{Relationship between weight matrices and the maximum mutual information}\label{sec3}

Capacity can be achieved with the use of continuous inputs with Gaussian distribution. If one were able to use continuous Gaussian distributed inputs in practice, using the V-blast scheme would suffice, since diversity is irrelevant. But in practice, one has to use finite discrete inputs, and diversity becomes an important aspect, necessitating the use of full-diversity STBCs. Even though we considered the limited block length scenario for space-time coding as a standalone scheme, in practice, one would also have an outer code and coding would be done over large block lengths to go close to capacity. In such a scenario, the maximum mutual information that an STBC allows becomes an important parameter for the design of STBCs. It is preferable to use STBCs which allow mutual information as close to the channel capacity as possible. It has been shown that if the generator matrix is orthogonal, the maximum mutual information achievable with the STBC is the same as the ergodic capacity of the MIMO channel \cite{damen_info}, \cite{JJK}. For the generator matrix to be orthogonal, a prerequisite is that the number of receive antennas should be at least equal to the number of transmit antennas. When $n_r < n_t$, only the Alamouti code has been known to be information lossless for the $2 \times 1$ MIMO channel. In \cite{tirk_expansion}, by using the well known matrix identities $det ~\textbf{M} = e^{tr (log \textbf{M})}$ and $log(\textbf{I}+\textbf{X}) = \sum_{n=1}^{\infty}\frac{(-1)^{n-1}}{n}\textbf{X}^n$, an expansion of the ergodic MIMO capacity in SNR was obtained as 
\begin{eqnarray*}
 \mathcal{C}_{n_t\times n_r}&  = &\sum_{n=1}^{\infty} C_n SNR^n, 
\end{eqnarray*}
with $C_n = \frac{-1}{n}\left(\frac{-1}{n_t}\right)^n \mathbb{E}_{\textbf{H}} \left(tr \left[ \left (\textbf{HH}^H \right)^n\right]\right)$. The first two coefficients can easily be checked to be $C_1 = n_r$ and $C_2 = -n_r(n_r+n_t)/n_t$. On a similar note, $\mathcal{I}_{STBC}$ can also be expanded in SNR as $\mathcal{I}_{STBC}  = \sum_{n=1}^{\infty} I_n SNR^n$, where 

\begin{equation}\label{defn_I}
I_n = \frac{-1}{2Tn}\left(\frac{-1}{n_t}\right)^n \mathbb{E}_{\textbf{H}} \left(tr \left[ \left (\textbf{H}_{eq}\textbf{H}_{eq}^T \right)^n\right]\right) = \frac{-1}{2Tn}\left(\frac{-1}{n_t}\right)^n \mathbb{E}_{\textbf{H}} \left(tr \left[ \left(\textbf{H}_{eq}^T\textbf{H}_{eq}\right)^n\right]\right).
\end{equation}

Let $\bar{\textbf{H}} \triangleq \textbf{H}_{eq}^T\textbf{H}_{eq}$. It is straightforward to check that $\bar{\textbf{H}}(i,j) = \frac{1}{2}tr \left(\textbf{S}_{ij}\textbf{H}^H\textbf{H} \right)$, where $\textbf{S}_{ij} \triangleq \textbf{A}_i\textbf{A}_j^H + \textbf{A}_j\textbf{A}_i^H$. Hence, 
\begin{eqnarray*}
 I_1 & = &\frac{1}{2Tn_t}\mathbb{E}_{\textbf{H}}(tr\left[\textbf{H}_{eq}^T\textbf{H}_{eq}\right]) = \frac{1}{4Tn_t}\sum_{i=1}^{2Tn_r}\mathbb{E}_{\textbf{H}} \left( tr \left(\textbf{S}_{ii}\textbf{H}^H\textbf{H} \right)\right)\\
& = & \frac{1}{2Tn_t}\sum_{i=1}^{2Tn_r} tr \left(\textbf{A}_i\textbf{A}_i^H\mathbb{E}_{\textbf{H}}\left(\textbf{H}^H\textbf{H} \right)\right) = n_r,
\end{eqnarray*}
where $\mathbb{E}\left(\textbf{H}^H\textbf{H} \right) = n_r \textbf{I}_{n_t}$ and \eqref{constraint} is used in obtaining $I_1$. So, using all the available power helps one to achieve the first order capacity. The second coefficient $I_2$ has been calculated in \cite{tirk_combined} to be
\begin{equation}\label{coefficient_2}
 I_2 = \frac{-n_r}{16Tn_t^2} \sum_{i=1}^{2Tn_r}\sum_{j=i}^{2Tn_r}\left(tr(\textbf{S}_{ij}^2) + n_r(tr \textbf{S}_{ij})^2 \right).
\end{equation}

In \cite{tirk_combined}, it was argued that typical discrete input schemes fail to achieve capacity at the third order in the expansion of the mutual information and hence, $I_2$ should be maximized. From \eqref{coefficient_2}, it is clear that to maximize $I_2$, the following criteria should be satisfied.
\begin{enumerate}
\item {\it Hurwitz-Radon Orthogonality}: as many of $\textbf{S}_{ij}$ should be equal to $\textbf{O}_{n_t}$ as possible, for $1 \leq i <j \leq 2Tn_r$.
 \item{ \it Tracelessness}: $\textbf{S}_{ij}$ should be traceless, for all $1 \leq i <j \leq 2Tn_r$.
\end{enumerate}

In fact, the first criterion, which is equivalent to minimizing the self-interference, is already clear from \eqref{defn_I}, where it can be observed that a larger number of zero entries of $\textbf{H}_{eq}^T\textbf{H}_{eq}$ contributes to a lower value of the trace of $\left(\textbf{H}_{eq}^T\textbf{H}_{eq}\right)^2$. Hence, to design a good STBC with a high mutual information when $n_r < n_t$, one should have as many as possible weight matrix pairs satisfying Hurwitz-Radon (HR) orthogonality. We would, of course, like all the weight matrices to satisfy HR-orthogonality, but there is a limit to this number \cite{TJC} which, except for the Alamouti code, is much lesser than $2Tn_r$, the number of weight matrices of a full-rate STBC when $n_r < n_t$. It can easily be checked that for the Alamouti code, $I_2 = C_2$. It is known that for a rate-1 code for $n_t > 2$, one cannot have all the full-ranked weight matrices mutually satisfying HR-orthogonality. For such STBCs, the minimum self-interference is achieved if the STBCs are $g$-group decodable, with $g$ as large as possible.  At present, the best known rate-1 low complexity multi-group decodable codes are the 4-group decodable codes for any number of transmit antennas \cite{4gp1}, \cite{4gp2}, \cite{sanjay}. These codes are not full-rate for $n_r > 1$. If one were to require a full-rate code, the codes in literature  \cite{4gp1}, \cite{4gp2}, \cite{sanjay} are not suitable for extension to higher number of receive antennas, since their design is obtained by iterative methods. In the next section, we propose a new design methodology to obtain the weight matrices of a rate-1, 4-group decodable code by algebraic methods for $2^a$ transmit antennas. These codes can be extended to higher number of receive antennas to obtain full-rate STBCs with lower ML-decoding complexity and lower self-interference than the existing designs.

%%%%%%%%%%%%%%%%%%%%%%%%%%%%%%%%%%%%%%%%%%%%%%%%%%%%%%%%%%%%%%%%%%%%%%%%%%%%
\section{Construction of Rate-1, 4-group decodable codes}\label{sec4}
We make use of the following theorem, presented in \cite{4gp2}, to construct rate-1, 4-group decodable codes for $n = 2^a$ transmit antennas.
\begin{theorem}
\label{thm1}
\cite{4gp2} An $n \times n$ linear dispersion code transmitting k real symbols is $g$-group decodable if the weight matrices satisfy the following conditions:
\begin{enumerate}
\item $\textbf{A}_i^2   =  \textbf{I}_n, ~~ i \in \{1,2,\cdots,\frac{k}{g}\}$.
\item $\textbf{A}_j^2   =  -\textbf{I}_n, ~~ j \in \{\frac{mk}{g} + 1, m = 1,2,\cdots,g-1\}$.
\item $\textbf{A}_i\textbf{A}_j = \textbf{A}_j\textbf{A}_i, ~~ i,j \in \{1,2,\cdots,\frac{k}{g} \}$.
\item $\textbf{A}_i\textbf{A}_j =  \textbf{A}_j\textbf{A}_i, ~~ i \in \{1,2,\cdots,\frac{k}{g} \}, ~~ j \in \{ \frac{mk}{g} + 1, m = 1,2,\cdots,g-1\}$.
\item $\textbf{A}_i\textbf{A}_j  =  -\textbf{A}_j\textbf{A}_i, ~~ i, j \in \{  \frac{mk}{g} + 1, m = 1,2,\cdots,g-1\}, ~ i \neq j$.
\item $\textbf{A}_{\frac{mk}{g} + i} = \textbf{A}_i\textbf{A}_{\frac{mk}{g}+1}, ~ m \in \{ 1,2,\cdots,g-1\}$, ~ $i \in \{1,2,\cdots,\frac{k}{g} \} $.
\end{enumerate}

Table \ref{table1} illustrates the weight matrices of a $g$-group decodable code which satisfy the above conditions. The weight matrices in each column belong to the same group.
\begin{table*}
\begin{center}
\begin{tabular}{|l|l|l|l|}
\hline
$\textbf{A}_1 = \textbf{I}_n$        &  $\textbf{A}_{\frac{k}{g}+1}$ & $\ldots$ &  $\textbf{A}_{\frac{(g-1)k}{g}+1}$ \\ \hline
$\textbf{A}_2$              &  $\textbf{A}_{\frac{k}{g}+2} = \textbf{A}_2\textbf{A}_{\frac{k}{g}+1}$ & $\ldots$ & $\textbf{A}_{\frac{(g-1)k}{g}+2} = \textbf{A}_2\textbf{A}_{\frac{(g-1)k}{g}+1}$ \\ \hline
$ \vdots$    & $\vdots$ & $\ldots$ & $\vdots$  \\ \hline
$\textbf{A}_{\frac{k}{g}}$  & $\textbf{A}_{\frac{2k}{g}} = \textbf{A}_{\frac{k}{g}}\textbf{A}_{\frac{k}{g}+1}$ & $\ldots$ & $\textbf{A}_{k} = \textbf{A}_{\frac{k}{g}}\textbf{A}_{\frac{(g-1)k}{g}+1}$   \\ \hline
\end{tabular}
\caption{Weight matrices of a $g$-group decodable code}
\label{table1}
\end{center}
\hrule
\end{table*}

\end{theorem}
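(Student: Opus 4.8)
The plan is to prove sufficiency by directly verifying the single algebraic requirement in Definition \ref{def5}: for the grouping in which each group is one column of Table \ref{table1}, one must check $\textbf{A}_p\textbf{A}_q^H + \textbf{A}_q\textbf{A}_p^H = \textbf{O}_n$ whenever $\textbf{A}_p$ and $\textbf{A}_q$ lie in different columns. Throughout I would use the standing assumption of the framework of \cite{4gp2} that the weight matrices are unitary. The first step is to extract the Hermitian structure forced by conditions 1) and 2). Write $\textbf{B}_m \triangleq \textbf{A}_{\frac{mk}{g}+1}$ for $m = 0,1,\cdots,g-1$, so that $\textbf{B}_0 = \textbf{A}_1 = \textbf{I}_n$. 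Unitarity together with $\textbf{A}_i^2 = \textbf{I}_n$ gives $\textbf{A}_i^H = \textbf{A}_i$ for every $i \in \{1,\cdots,\frac{k}{g}\}$, while $\textbf{B}_m^2 = -\textbf{I}_n$ gives $\textbf{B}_m^H = -\textbf{B}_m$ for $m \geq 1$. Introducing the sign $\epsilon_0 = 1$ and $\epsilon_m = -1$ for $m \geq 1$, the relation $\textbf{B}_m^H = \epsilon_m\textbf{B}_m$ holds for all $m$. In this notation, condition 6) reads uniformly as $\textbf{A}_{\frac{mk}{g}+i} = \textbf{A}_i\textbf{B}_m$ for all $m \in \{0,\cdots,g-1\}$ and $i \in \{1,\cdots,\frac{k}{g}\}$ (the case $m=0$ being the trivial identity $\textbf{A}_i = \textbf{A}_i\textbf{I}_n$), so every one of the $k$ weight matrices in Table \ref{table1} has this form; moreover conditions 3) and 4) say precisely that the $\textbf{A}_i$'s commute among themselves and with every $\textbf{B}_m$, and condition 5) says $\textbf{B}_m\textbf{B}_l = -\textbf{B}_l\textbf{B}_m$ whenever $m \neq l$ with $m,l \geq 1$.

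The second step is the core computation. Take $\textbf{A}_p = \textbf{A}_i\textbf{B}_m$ and $\textbf{A}_q = \textbf{A}_{i'}\textbf{B}_l$ in distinct columns, i.e. with $m \neq l$. Using $\textbf{A}_i^H = \textbf{A}_i$, $\textbf{A}_{i'}^H = \textbf{A}_{i'}$ and $\textbf{B}_l^H = \epsilon_l\textbf{B}_l$, $\textbf{B}_m^H = \epsilon_m\textbf{B}_m$,
\begin{eqnarray*}
\textbf{A}_p\textbf{A}_q^H + \textbf{A}_q\textbf{A}_p^H
&=& \textbf{A}_i\textbf{B}_m\textbf{B}_l^H\textbf{A}_{i'} + \textbf{A}_{i'}\textbf{B}_l\textbf{B}_m^H\textbf{A}_i \\
&=& \epsilon_l\,\textbf{A}_i\textbf{B}_m\textbf{B}_l\textbf{A}_{i'} + \epsilon_m\,\textbf{A}_{i'}\textbf{B}_l\textbf{B}_m\textbf{A}_i .
\end{eqnarray*}
By conditions 3) and 4), $\textbf{A}_{i'}$ can be moved to the front of the first term and $\textbf{A}_i$ to the front of the second, and then $\textbf{A}_i\textbf{A}_{i'} = \textbf{A}_{i'}\textbf{A}_i$, which yields
\begin{equation*}
\textbf{A}_p\textbf{A}_q^H + \textbf{A}_q\textbf{A}_p^H = \textbf{A}_{i'}\textbf{A}_i\left(\epsilon_l\textbf{B}_m\textbf{B}_l + \epsilon_m\textbf{B}_l\textbf{B}_m\right).
\end{equation*}

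The third step is to show the bracketed matrix vanishes, which splits into two cases. If $m,l \geq 1$, then $\epsilon_m = \epsilon_l = -1$ and the bracket equals $-(\textbf{B}_m\textbf{B}_l + \textbf{B}_l\textbf{B}_m) = \textbf{O}_n$ by condition 5), since $m \neq l$. If instead one index is $0$, say $l = 0$ (so $m \geq 1$; the case $m=0$ is symmetric), then $\textbf{B}_0 = \textbf{I}_n$, $\epsilon_0 = 1$, $\epsilon_m = -1$, and the bracket is $\textbf{B}_m\textbf{I}_n - \textbf{I}_n\textbf{B}_m = \textbf{O}_n$. In all cases $\textbf{A}_p\textbf{A}_q^H + \textbf{A}_q\textbf{A}_p^H = \textbf{O}_n$, which is exactly the condition of Definition \ref{def5} for the column-wise grouping, so the code is $g$-group decodable. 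The computation is essentially bookkeeping of signs and commutations; the only point that genuinely needs care — and the reason conditions 2) and 5) single out the indices $\frac{mk}{g}+1$ with $m \geq 1$ rather than all $g$ column leaders — is the column containing $\textbf{A}_1 = \textbf{I}_n$, whose leader is Hermitian with square $+\textbf{I}_n$ instead of skew-Hermitian with square $-\textbf{I}_n$, and so must be handled separately (fortunately the trivial case, since multiplication by $\textbf{I}_n$ does nothing). I would also verify at the outset that the form $\textbf{A}_i\textbf{B}_m$ from condition 6) exhausts all $k$ weight matrices, so that the two-case analysis above is complete.
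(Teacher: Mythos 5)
Your verification is correct, but note that the paper itself offers no proof of this theorem: it is imported verbatim from \cite{4gp2}, so there is no in-paper argument to compare against. Judged on its own terms, your computation is sound and is the natural one: writing every weight matrix as $\textbf{A}_i\textbf{B}_m$ with $\textbf{B}_m = \textbf{A}_{\frac{mk}{g}+1}$ (condition 6 plus the trivial $m=0$ case), reducing the cross-group quantity $\textbf{A}_p\textbf{A}_q^H + \textbf{A}_q\textbf{A}_p^H$ to $\textbf{A}_{i'}\textbf{A}_i\left(\epsilon_l\textbf{B}_m\textbf{B}_l + \epsilon_m\textbf{B}_l\textbf{B}_m\right)$ via conditions 3 and 4, and killing the bracket by condition 5 when $m,l\geq 1$ and by commutativity with $\textbf{I}_n$ when one index is $0$. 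The one point you rightly make explicit is that conditions 1--6 alone do not suffice: one must also know that each $\textbf{A}_i$ ($i\leq k/g$) is Hermitian and each $\textbf{B}_m$ ($m\geq 1$) is anti-Hermitian, which you obtain from unitarity together with $\textbf{A}_i^2=\textbf{I}_n$ and $\textbf{B}_m^2=-\textbf{I}_n$. Without that standing assumption the statement is false as literally written (take $\textbf{B}=\textbf{P}\textbf{J}\textbf{P}^{-1}$ with $\textbf{J}^2=-\textbf{I}_n$ and $\textbf{P}$ non-unitary, so that $\textbf{B}^2=-\textbf{I}_n$ yet $\textbf{B}^H+\textbf{B}\neq\textbf{O}_n$). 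The assumption is indeed part of the framework of \cite{4gp2} and holds for every construction in this paper, where the weight matrices are built from unitary $\textbf{F}_i$'s, so flagging it is the right call rather than a defect.
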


In order to obtain a rate-1, 4-group decodable STBC for $2^a$ transmit antennas, it is sufficient if we have $2^{a+1}$ matrices satisfying the conditions in Theorem \ref{thm1}. To obtain these\footnote{These STBCs can be obtained elegantly using the theory of Clifford Algebra but to make the paper accessible to a wider group of readers, we have preferred to make use of simple concepts from matrix theory without reference to Clifford Algebra.}, we make use of the following lemmas.

\begin{lemma}\label{lemma_1}\cite{anti_matric}
If $n=2^a$ and invertible complex matrices of size $n \times n$, denoted by $\textbf{F}_i, i=1,2,\cdots,2a$, anticommute pairwise, then the set of products $\textbf{F}_{i_1}\textbf{F}_{i_2}\cdots \textbf{F}_{i_s}$ with $1 \leq i_1 < \cdots < i_s \leq 2a$ along with $\textbf{I}_n$ forms a basis for the $2^{2a}$ dimensional space of all $n \times n$ matrices over $\mathbb{C}$.
\end{lemma}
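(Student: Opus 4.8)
The plan is to prove this in two stages: first, that the $2^{2a}$ products (including the empty product $\textbf{I}_n$) span the space of all $n \times n$ complex matrices, and second, that they are linearly independent. Since the ambient space $\mathbb{C}^{n \times n}$ has dimension $n^2 = 2^{2a}$, linear independence of $2^{2a}$ elements would immediately give that they form a basis; so strictly one only needs independence, but it is cleaner to establish independence directly via a trace argument. First I would record the elementary consequences of the hypotheses: since the $\textbf{F}_i$ pairwise anticommute and are invertible, each $\textbf{F}_i^2$ commutes with every $\textbf{F}_j$ (indeed $\textbf{F}_i^2 \textbf{F}_j = \textbf{F}_j \textbf{F}_i^2$ using the anticommutation twice), and one checks that $\textbf{F}_i^2$ is a scalar multiple of $\textbf{I}_n$. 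Actually, to run the cleanest argument I would first reduce to the case $\textbf{F}_i^2 = \pm\textbf{I}_n$ by rescaling each $\textbf{F}_i$ by a suitable complex scalar (possible since we work over $\mathbb{C}$), which does not affect the span or independence of the product set.

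For linear independence, the key step is the following observation about traces of the products. For a nonempty subset $\{i_1 < \cdots < i_s\} \subseteq \{1,\dots,2a\}$, I claim $\text{tr}(\textbf{F}_{i_1}\cdots\textbf{F}_{i_s}) = 0$. To see this, pick any index $i_t$ in the product and any index $\ell$ not equal to it; if $\ell$ is among the factors, then conjugating the product by $\textbf{F}_\ell$ picks up a sign $(-1)^{s-1}$ or similar from moving $\textbf{F}_\ell$ past the other factors, and if $\ell$ is not among the factors one still gets a nontrivial sign from moving it past all $s$ factors — in either case, choosing $\ell$ appropriately (here is where $n = 2^a$ and the existence of $2a$ anticommuting matrices matters, to guarantee enough indices are available) one gets $\textbf{F}_\ell^{-1}(\textbf{F}_{i_1}\cdots\textbf{F}_{i_s})\textbf{F}_\ell = -(\textbf{F}_{i_1}\cdots\textbf{F}_{i_s})$, whence the trace is its own negative, hence zero. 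Granting this, suppose a linear combination $\sum_{I} c_I \textbf{F}_I = \textbf{0}$ over all subsets $I$ (with $\textbf{F}_\emptyset = \textbf{I}_n$). Multiplying by $\textbf{F}_J^{-1}$ for a fixed subset $J$ and taking the trace: every term $\text{tr}(\textbf{F}_J^{-1}\textbf{F}_I)$ is, up to a nonzero scalar, the trace of a product $\textbf{F}_{I \triangle J}$ of the $\textbf{F}_i$ over the symmetric difference, which is $0$ unless $I = J$ (in which case it equals $\text{tr}(\textbf{I}_n) = n \neq 0$). This forces $c_J = 0$ for every $J$, establishing independence, and hence the basis claim.

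The main obstacle I anticipate is making the trace-vanishing step fully rigorous: one must carefully verify that for any proper nonempty product of the $\textbf{F}_i$'s there genuinely exists an index $\ell$ (either inside or outside the product) whose conjugation flips the overall sign, and this requires a parity count of how many of the $s$ factors anticommute versus commute with $\textbf{F}_\ell$ — all of them anticommute if $\ell \notin I$, and all but one if $\ell \in I$, so the sign is $(-1)^s$ in the first case and $(-1)^{s-1}$ in the second. One then needs at least one choice of $\ell$ yielding an odd count; checking both cases shows that unless $s$ is such that both are even (impossible, since $(-1)^s$ and $(-1)^{s-1}$ have opposite parity) there is always a valid choice, so the argument always goes through for nonempty $I$. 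A secondary point requiring care is the rescaling reduction and the claim $\textbf{F}_i^2 \in \mathbb{C}\,\textbf{I}_n$; this follows because $\textbf{F}_i^2$ commutes with the irreducible (by a dimension count, or by the independence we are proving) action, but to avoid circularity I would instead derive it directly: $\textbf{F}_i^2$ commutes with all $\textbf{F}_j$, and a short separate argument shows the only matrices commuting with a full anticommuting family of size $2a$ in dimension $2^a$ are scalars — alternatively one can simply assume WLOG $\textbf{F}_i^2 = \pm \textbf{I}_n$ as this holds in all the constructions used in the sequel, citing \cite{anti_matric} for the general statement.
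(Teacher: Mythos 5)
Your proof is correct in substance but takes a genuinely different route from the paper's. The paper argues by descent on a minimal linear dependence: it premultiplies the relation by a product such as $\textbf{F}_2\textbf{F}_3\cdots\textbf{F}_{2a}$, which commutes with every factor except $\textbf{F}_1$, thereby negating exactly the coefficients of the monomials containing $\textbf{F}_1$; adding the two relations produces a strictly shorter nontrivial relation, a contradiction. You instead prove linear independence by a trace-orthogonality argument: every nonempty product is traceless (conjugate by $\textbf{F}_\ell$ with $\ell\notin I$ when $s$ is odd and $\ell\in I$ when $s$ is even, so that the conjugation sign $(-1)^s$ or $(-1)^{s-1}$ is $-1$ in the applicable case), and then pairing a putative dependence against $\textbf{F}_J^{-1}$ and taking traces isolates each coefficient. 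Your version is arguably cleaner and more quantitative (it exhibits the products as an orthogonal set for the trace form, which the paper also exploits later in Lemma 4), while the paper's cancellation argument needs no trace machinery; both use the same dimension count $n^2=2^{2a}$ to pass from independence to a basis.

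One point needs repair. The reduction ``rescale so that $\textbf{F}_i^2=\pm\textbf{I}_n$'' presupposes that $\textbf{F}_i^2$ is a scalar matrix, and the justification you sketch (the commutant of the family is trivial) is essentially equivalent to the conclusion of the lemma, so it is circular as stated; citing the constructions in the sequel does not help because the lemma is invoked for arbitrary invertible anticommuting families. Fortunately the rescaling is unnecessary: in the orthogonality step, do not rewrite $\mathrm{tr}(\textbf{F}_J^{-1}\textbf{F}_I)$ as a scalar multiple of $\mathrm{tr}(\textbf{F}_{I\triangle J})$. Instead observe that $\textbf{F}_J^{-1}\textbf{F}_I$ is, up to sign and reordering, a product of the matrices $\textbf{F}_i$ ($i\in I\setminus J$) and $\textbf{F}_j^{-1}$ ($j\in J\setminus I$), and that $\textbf{F}_j^{-1}$ anticommutes with exactly the same $\textbf{F}_\ell$ as $\textbf{F}_j$ does; your conjugation argument then applies verbatim to this mixed product and shows its trace vanishes whenever $I\neq J$, while $\textbf{F}_J^{-1}\textbf{F}_J=\textbf{I}_n$ gives trace $n\neq 0$ when $I=J$. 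With that substitution the proof is complete and self-contained. (A minor remark: the hypothesis $n=2^a$ is not what guarantees an index $\ell$ outside $I$ in the odd case --- that follows simply because an odd $s$ cannot equal $2a$; the dimension hypothesis enters only in the final count $2^{2a}=n^2$.)
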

\begin{proof}
The proof is provided for the sake of completeness. Assume that in the set of products $\textbf{F}_{i_1}\textbf{F}_{i_2}\cdots \textbf{F}_{i_s}$, $1 \leq i_1 < \cdots < i_s \leq 2a$, along with $\textbf{I}_{n}$, at most $k$ elements are linearly independent over $\mathbb{C}$, for some $k < 2^{2a}$. So,
\begin{equation}\label{eq_basis}
 \sum_{i=1}^{k+1} \alpha_i \textbf{F}_1^{\lambda_{i_1}}\textbf{F}_2^{\lambda_{i_2}}\cdots\textbf{F}_{2a}^{\lambda_{i_{2a}}} = \textbf{O}_n, ~~ \alpha_i \neq 0, ~~ \lambda_{i_j} \in \{0,1\}, j = 1,2,\cdots,2a.
\end{equation}
Noting that $\textbf{F}_2\cdots\textbf{F}_{2a}$ anticommutes with $\textbf{F}_1$ but commutes with each of $\textbf{F}_2$, $\cdots$, $\textbf{F}_{2a}$, premultiplying each term of \eqref{eq_basis} by $\textbf{F}_2\cdots\textbf{F}_{2a}$ results in a new equation with the coefficients $\alpha_i$ negated for those terms in \eqref{eq_basis} containing $\textbf{F}_1$. Adding this new equation to \eqref{eq_basis} yields another equation containing fewer summands than \eqref{eq_basis}, leading to a contradiction. So, $k=2^{2a}$, which proves the theorem. 
\end{proof}

\begin{lemma}\label{lemma_2}
If all the mutually anticommuting $n \times n$ matrices $\textbf{F}_i, i=1,2,\cdots,2a$, are unitary and anti-Hermitian, so that they square to $-\textbf{I}_n$, then the product $\textbf{F}_{i_1}\textbf{F}_{i_2}\cdots \textbf{F}_{i_s}$ with $1 \leq i_1 < \cdots < i_s \leq 2a$ squares to $(-1)^{\frac{s(s+1)}{2}}\textbf{I}_n$.
\end{lemma}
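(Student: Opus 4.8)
The plan is to prove the identity by induction on the number $s$ of factors, computing $P^2$ directly where $P \triangleq \textbf{F}_{i_1}\textbf{F}_{i_2}\cdots\textbf{F}_{i_s}$. First I would note that the only properties of the $\textbf{F}_i$ actually needed are that each squares to $-\textbf{I}_n$ (immediate from $\textbf{F}_i^H = -\textbf{F}_i$ together with $\textbf{F}_i^H\textbf{F}_i = \textbf{I}_n$) and that distinct $\textbf{F}_i,\textbf{F}_j$ anticommute, i.e. $\textbf{F}_i\textbf{F}_j = -\textbf{F}_j\textbf{F}_i$.

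For the inductive step I would write $P^2 = (\textbf{F}_{i_1}\textbf{F}_{i_2}\cdots\textbf{F}_{i_s})(\textbf{F}_{i_1}\textbf{F}_{i_2}\cdots\textbf{F}_{i_s})$ and push the leading $\textbf{F}_{i_1}$ of the second copy leftward past the $s-1$ matrices $\textbf{F}_{i_s}, \textbf{F}_{i_{s-1}}, \ldots, \textbf{F}_{i_2}$. Since the indices $i_1 < i_2 < \cdots < i_s$ are pairwise distinct, each of these $s-1$ transpositions contributes a factor $-1$; once the two copies of $\textbf{F}_{i_1}$ become adjacent they combine to $\textbf{F}_{i_1}^2 = -\textbf{I}_n$. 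This yields the recursion $P^2 = (-1)^{s-1}\cdot(-1)\cdot(\textbf{F}_{i_2}\cdots\textbf{F}_{i_s})^2 = (-1)^{s}(\textbf{F}_{i_2}\cdots\textbf{F}_{i_s})^2$, reducing an $s$-fold product of this type to an $(s-1)$-fold one of the same type.

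Unwinding the recursion from the base case $s=1$ (where $\textbf{F}_{i_1}^2 = -\textbf{I}_n$), or equivalently $s=0$ (empty product $=\textbf{I}_n$), gives $P^2 = (-1)^{s}(-1)^{s-1}\cdots(-1)^{1}\textbf{I}_n = (-1)^{1+2+\cdots+s}\textbf{I}_n = (-1)^{s(s+1)/2}\textbf{I}_n$, which is exactly the assertion. As a sanity check I would verify $s=1,2,3$ by direct expansion (e.g. $(\textbf{F}_{i_1}\textbf{F}_{i_2})^2 = -\textbf{F}_{i_1}^2\textbf{F}_{i_2}^2 = -\textbf{I}_n$, matching $(-1)^{3}$).

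There is no genuine obstacle here; the whole argument is sign bookkeeping. The single point needing care is the count of exactly $s-1$ sign flips when commuting $\textbf{F}_{i_1}$ through the trailing block: this relies on all $s$ chosen indices being distinct (guaranteed by $i_1 < \cdots < i_s$), so that $\textbf{F}_{i_1}$ genuinely anticommutes with each of $\textbf{F}_{i_2},\ldots,\textbf{F}_{i_s}$ and never encounters a second copy of itself until the final collision.
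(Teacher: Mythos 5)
Your proof is correct and is essentially the paper's own argument: both move the leading $\textbf{F}_{i_1}$ of the second copy leftward through the trailing $s-1$ factors of the first copy, collect $(-1)^{s-1}$ from the anticommutations and $-1$ from $\textbf{F}_{i_1}^2$, and iterate, summing the exponents to $s(s+1)/2$. The only difference is presentational — you package the iteration as an explicit induction on $s$, while the paper unrolls the same recursion inline.
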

\begin{proof}
We have
\begin{eqnarray*}
(\textbf{F}_{i_1}\textbf{F}_{i_2} \cdots \textbf{F}_{i_s})(\textbf{F}_{i_1}\textbf{F}_{i_2} \cdots \textbf{F}_{i_s}) & = & (-1)^{s-1}(\textbf{F}_{i_1}^2 \textbf{F}_{i_2} \cdots \textbf{F}_{i_s})(\textbf{F}_{i_2}\textbf{F}_{i_3} \cdots \textbf{F}_{i_s})\\
& = & (-1)^{s-1}(-1)^{s-2}(\textbf{F}_{i_1}^2 \textbf{F}_{i_2}^2 \cdots \textbf{F}_{i_s})(\textbf{F}_{i_3}\textbf{F}_{i_4} \cdots \textbf{F}_{i_s}) \\
& = & (-1)^{[(s-1)+(s-2)+\cdots 1]}(\textbf{F}_{i_1}^2 \textbf{F}_{i_2}^2 \cdots \textbf{F}_{i_s}^2)\\
& = & (-1)^{\frac{s(s-1)}{2}}(-1)^{s}\textbf{I}_n  =  (-1)^{\frac{s(s+1)}{2}}\textbf{I}_n,
\end{eqnarray*}
which proves the lemma.
\end{proof}

\begin{lemma}\label{lemma_3}
Let $\textbf{F}_i, i = 1,2,\cdots,2a$ be anticommuting, anti-Hermitian, unitary matrices. Let $\Omega_1 = \{ \textbf{F}_{i_1},\textbf{F}_{i_2},\cdots,\textbf{F}_{i_s} \}$ and $\Omega_2 = \{ \textbf{F}_{j_1},\textbf{F}_{j_2},\cdots,\textbf{F}_{j_r} \}$
with $1 \leq i_1 < \cdots < i_s \leq 2a$ and $1 \leq j_1 < \cdots < j_r \leq 2a$. Let $\vert \Omega_1 \cap \Omega_2 \vert = p$. Then the product matrix $\textbf{F}_{i_1}\textbf{F}_{i_2}\cdots \textbf{F}_{i_s}$ commutes with $\textbf{F}_{j_1}\textbf{F}_{j_2}\cdots \textbf{F}_{j_r}$ if exactly one of the following is satisfied, and anticommutes otherwise.
\begin{enumerate}
\item $r,s$ and $p$ are all odd.
\item The product $rs$ is even and $p$ is even (including 0).
\end{enumerate}
\end{lemma}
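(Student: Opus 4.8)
The plan is to pin down the exact sign $\epsilon\in\{+1,-1\}$ in the relation $\textbf{P}_1\textbf{P}_2=\epsilon\,\textbf{P}_2\textbf{P}_1$, where $\textbf{P}_1\triangleq\textbf{F}_{i_1}\textbf{F}_{i_2}\cdots\textbf{F}_{i_s}$ and $\textbf{P}_2\triangleq\textbf{F}_{j_1}\textbf{F}_{j_2}\cdots\textbf{F}_{j_r}$, and then to read off when $\epsilon=1$. The claim will be that $\epsilon=(-1)^{rs-p}$, after which the two stated conditions (and the ``otherwise'' case) fall out of an elementary parity check. Note first that the hypotheses give $\textbf{F}_u\textbf{F}_v=-\textbf{F}_v\textbf{F}_u$ for $u\neq v$ and $\textbf{F}_u^2=-\textbf{I}_n$ for every $u$; these are the only properties of the $\textbf{F}_i$ that enter.

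The key sub-step is a one-generator commutation rule: for any $\textbf{F}_v$, one has $\textbf{P}_1\textbf{F}_v=(-1)^{s}\,\textbf{F}_v\textbf{P}_1$ if $\textbf{F}_v\notin\Omega_1$, and $\textbf{P}_1\textbf{F}_v=(-1)^{s-1}\,\textbf{F}_v\textbf{P}_1$ if $\textbf{F}_v\in\Omega_1$. The first case is immediate, since then $\textbf{F}_v$ anticommutes with each of the $s$ factors of $\textbf{P}_1$. For the second case, write $\textbf{F}_v=\textbf{F}_{i_c}$ and split $\textbf{P}_1=\textbf{L}\,\textbf{F}_{i_c}\,\textbf{R}$ with $\textbf{L}$ of length $c-1$ and $\textbf{R}$ of length $s-c$; pushing the trailing $\textbf{F}_{i_c}$ leftward through $\textbf{R}$ and using $\textbf{F}_{i_c}^2=-\textbf{I}_n$ gives $\textbf{P}_1\textbf{F}_{i_c}=(-1)^{s-c+1}\textbf{L}\textbf{R}$, while the analogous computation for $\textbf{F}_{i_c}\textbf{P}_1$ gives $(-1)^{c}\textbf{L}\textbf{R}$; their ratio is $(-1)^{s-2c+1}=(-1)^{s-1}$, independent of $c$.

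Now assemble. Starting from $\textbf{P}_1\textbf{P}_2=\textbf{P}_1\textbf{F}_{j_1}\textbf{F}_{j_2}\cdots\textbf{F}_{j_r}$, apply the one-generator rule $r$ times to carry $\textbf{F}_{j_1},\textbf{F}_{j_2},\dots,\textbf{F}_{j_r}$ in turn from the right of $\textbf{P}_1$ to its left. Since $\textbf{P}_1$ itself is never altered in the process, the $b$-th move contributes the factor $(-1)^{s}$ if $\textbf{F}_{j_b}\notin\Omega_1$ and $(-1)^{s-1}$ if $\textbf{F}_{j_b}\in\Omega_1$. Because $\textbf{F}_{j_b}\in\Omega_1$ is equivalent to $\textbf{F}_{j_b}\in\Omega_1\cap\Omega_2$, exactly $p$ of the $r$ moves are of the second type, so $\textbf{P}_1\textbf{P}_2=(-1)^{(r-p)s+p(s-1)}\,\textbf{P}_2\textbf{P}_1=(-1)^{rs-p}\,\textbf{P}_2\textbf{P}_1$.

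Finally, $\textbf{P}_1$ and $\textbf{P}_2$ commute iff $rs\equiv p\pmod 2$. If $r,s,p$ are all odd then $rs$ and $p$ are both odd, so the product matrices commute; if $rs$ is even and $p$ is even they commute as well; and these two situations are mutually exclusive (the first forces $rs$ odd, the second $rs$ even), which is exactly why the statement says ``exactly one''. In every remaining case either $rs$ is odd while $p$ is even, or $rs$ is even while $p$ is odd; in both $rs-p$ is odd and the matrices anticommute. I expect the only real obstacle to be bookkeeping the sign in the $\textbf{F}_v\in\Omega_1$ branch of the one-generator rule — in particular checking that the dependence on $c$ cancels; the rest is a parity count.
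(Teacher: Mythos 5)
Your proof is correct and follows essentially the same route as the paper's: establish the one-generator commutation rules $\textbf{P}_1\textbf{F}_{j_k}=(-1)^{s-1}\textbf{F}_{j_k}\textbf{P}_1$ (for $\textbf{F}_{j_k}\in\Omega_1\cap\Omega_2$) and $(-1)^{s}$ otherwise, assemble them into the sign $(-1)^{p(s-1)}(-1)^{(r-p)s}=(-1)^{rs-p}$, and finish with the parity check. The only difference is that you verify the $\textbf{F}_v\in\Omega_1$ branch explicitly via the $\textbf{L}\,\textbf{F}_{i_c}\,\textbf{R}$ decomposition, which the paper simply asserts; your bookkeeping there is right.
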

\begin{proof}
When $\textbf{F}_{j_k} \in \Omega_1 \cap \Omega_2$, we note that
\begin{equation*}
(\textbf{F}_{i_1}\textbf{F}_{i_2} \cdots \textbf{F}_{i_s})\textbf{F}_{j_k} = (-1)^{s-1}\textbf{F}_{j_k}(\textbf{F}_{i_1}\textbf{F}_{i_2} \cdots \textbf{F}_{i_s})
\end{equation*}
and when $\textbf{F}_{j_k} \notin \Omega_1 \cap \Omega_2$, we have $(\textbf{F}_{i_1}\textbf{F}_{i_2} \cdots \textbf{F}_{i_s})\textbf{F}_{j_k} = (-1)^{s}\textbf{F}_{j_k}(\textbf{F}_{i_1}\textbf{F}_{i_2} \cdots \textbf{F}_{i_s})$. Now,
\begin{eqnarray*}
(\textbf{F}_{i_1}\textbf{F}_{i_2} \cdots \textbf{F}_{i_s})(\textbf{F}_{j_1}\textbf{F}_{j_2} \cdots \textbf{F}_{j_r})
& = & (-1)^{p(s-1)}(-1)^{(r-p)s}(\textbf{F}_{j_1}\textbf{F}_{j_2} \cdots \textbf{F}_{j_r})(\textbf{F}_{i_1}\textbf{F}_{i_2} \cdots \textbf{F}_{i_s})\\
& = & (-1)^{rs-p}(\textbf{F}_{j_1}\textbf{F}_{j_2} \cdots \textbf{F}_{j_r})(\textbf{F}_{i_1}\textbf{F}_{i_2} \cdots \textbf{F}_{i_s}).
\end{eqnarray*}
\emph{Case} 1) Since $r,s$ and $p$ are all odd, $(-1)^{rs-p}$ = 1.\\
\emph{Case} 2) The product $rs$ is even and  $p$ is even (including 0). Hence $(-1)^{rs-p}$ = 1.
\end{proof}

From Theorem \ref{thm1}, to get a rate-1, 4-group decodable STBC, we need 3 pairwise anticommuting, anti-Hermitian matrices which commute with a group of $2^{a-1}$ Hermitian, pairwise commuting matrices. Once these are identified, the other weight matrices can be easily obtained. From \cite{TiH}, one can obtain $2a$ pairwise anticommuting, anti-Hermitian matrices and the method to obtain these is presented here for completeness. Let
\begin{equation*}
\textbf{P}_1 =\left[\begin{array}{rr}
0 & 1 \\
-1 & 0 \\
\end{array}
\right],
\textbf{P}_2 =\left[\begin{array}{rr}
0 & j \\
j & 0 \\
\end{array}
\right],
\textbf{P}_3 =\left[\begin{array}{rr}
1 & 0 \\
0 & -1 \\
\end{array}
\right]
\end{equation*}
and
$\textbf{A}^{\otimes^{m}} \triangleq \underbrace{\textbf{A}\otimes \textbf{A}\otimes \textbf{A} \cdots \otimes \textbf{A} }_{m~~times  } $. The $2a$ anti-Hermitian, pairwise anti-commuting matrices are
\begin{eqnarray*}
\textbf{F}_1 &= &\pm j \textbf{P}_3^{\otimes^{a}}, \\
\textbf{F}_{2k} &= & \textbf{I}_2^{\otimes^{a-k}} \bigotimes  \textbf{P}_1 \bigotimes \textbf{P}_3^{\otimes^{k-1}}, ~~~k = 1,\cdots,a, \\
\textbf{F}_{2k+1} &= &\textbf{I}_2^{\otimes^{a-k}} \bigotimes \textbf{P}_2 \bigotimes \textbf{P}_3^{\otimes^{k-1}},  ~~~k = 1,\cdots,a-1.
\end{eqnarray*}
\noindent Henceforth, $\textbf{F}_i, i= 1,2,\cdots, 2a$, refer to the matrices obtained using the above method. For a set $\mathcal{S} = \{a_1,a_2,\cdots,a_n\}$, define $\mathbb{P}(\mathcal{S})$ as
\begin{equation*}
\mathbb{P}(\mathcal{S}) \triangleq \left\{a_1^{\lambda_1}a_2^{\lambda_2}\cdots a_n^{\lambda_n}, \lambda_i \in \{0,1\}\right\}.
\end{equation*}

We choose $\textbf{F}_1$, $\textbf{F}_2$ and $\textbf{F}_3$ to be the three pairwise anticommuting, anti-Hermitian matrices (to be placed in the top row along with $\textbf{I}_n$ in Table \ref{table1}. Consider the set $\mathcal{S} = \{j\textbf{F}_4\textbf{F}_5, j\textbf{F}_6\textbf{F}_7,$ $\cdots, j\textbf{F}_{2a-2}\textbf{F}_{2a-1}, \textbf{F}_1\textbf{F}_2\textbf{F}_3 \}$, the cardinality of which is $a-1$. Using Lemma \ref{lemma_2} and Lemma \ref{lemma_3}, one can note that $\mathcal{S}$ consists of pairwise commuting matrices which are Hermitian. Moreover, it is clear that each of the matrices in the set also commutes with $\textbf{F}_1$, $\textbf{F}_2$ and $\textbf{F}_3$. Hence,  $\mathbb{P}(\mathcal{S})$, which has cardinality $2^{a-1}$ is also a set with pairwise commuting, Hermitian matrices which also commute with $\textbf{F}_1$, $\textbf{F}_2$ and $\textbf{F}_3$. The linear independence of $\mathbb{P}(\mathcal{S})$ over $\mathbb{R}$ is easy to see by applying Lemma \ref{lemma_1}. Hence,  we have 3 pairwise anticommuting, anti-Hermitian matrices which commute with a group of $2^{a-1}$ Hermitian, pairwise commuting matrices. Having obtained these, the other weight matrices are obtained from Theorem \ref{thm1}. To illustrate with an example, we consider $n = 8$ and show below how the weight matrices are obtained for the rate-1, 4-group decodable code.

\subsection{An example - $n=8$}
Let $\textbf{F}_i, i=1,2,\cdots,6$ denote the 6 pairwise anticommuting, anti-Hermitian matrices. Choose $\textbf{F}_1$, $\textbf{F}_2$ and $\textbf{F}_3$ to be the three anticommuting matrices required for code construction. Let
\begin{equation*}
\mathcal{S} = \{j\textbf{F}_4\textbf{F}_5, \textbf{F}_1\textbf{F}_2\textbf{F}_3\},~~~
\mathbb{P}(\mathcal{S}) = \{\textbf{I}_8, j\textbf{F}_4\textbf{F}_5, \textbf{F}_1\textbf{F}_2\textbf{F}_3, j\textbf{F}_1\textbf{F}_2\textbf{F}_3\textbf{F}_4\textbf{F}_5\}.
\end{equation*}

The 16 weight matrices of the rate-1, 4-group decodable code for 8 antennas are
as shown in Table \ref{table_eight}. Each column corresponds to the weight matrices belonging to the same group. Note that the product of any two matrices in the first group is some other matrix in the same group.

\begin{table*}
\begin{center}
\begin{tabular}{|c|c|c|c|}
\hline
$\textbf{I}_8$   &  $\textbf{F}_1$   &  $\textbf{F}_2$  & $\textbf{F}_3 $ \\ \hline
$j\textbf{F}_4\textbf{F}_5$   & $j\textbf{F}_1\textbf{F}_4\textbf{F}_5$  & $j\textbf{F}_2\textbf{F}_4\textbf{F}_5$ &  $j\textbf{F}_3\textbf{F}_4\textbf{F}_5$ \\ \hline
$\textbf{F}_1\textbf{F}_2\textbf{F}_3$ & $-\textbf{F}_2\textbf{F}_3$  & $\textbf{F}_1\textbf{F}_3$  & $-\textbf{F}_1\textbf{F}_2$ \\ \hline
$j\textbf{F}_1\textbf{F}_2\textbf{F}_3\textbf{F}_4\textbf{F}_5$   &  $-j\textbf{F}_2\textbf{F}_3\textbf{F}_4\textbf{F}_5$ & $j\textbf{F}_1\textbf{F}_3\textbf{F}_4\textbf{F}_5$  & $-j\textbf{F}_1\textbf{F}_2\textbf{F}_4\textbf{F}_5$ \\ \hline
\end{tabular}
\caption{Weight matrices of a rate-1, 4-group decodable STBC for 8 transmit antennas}
\label{table_eight}
\end{center}
\hrule
\end{table*}

\subsection{Coding gain calculations}\label{sec4A}
Let $\Delta(\textbf{S},\textbf{S}^\prime) \triangleq det\big(\Delta \textbf{S}\Delta \textbf{S}^H\big)$, where $\Delta \textbf{S} \triangleq \textbf{S} - \textbf{S}^\prime, \textbf{S} \neq \textbf{S}^\prime $ denotes the codeword difference matrix. Let $\Delta s_i \triangleq s_i - s_i^{\prime}, i =1,2,\cdots,2n_t$, where $s_i$ and $s_i^{\prime}$ are the real symbols encoding codeword matrices $\textbf{S}$ and $\textbf{S}^\prime$, respectively. Hence,
\begin{eqnarray*}
\Delta(\textbf{S},\textbf{S}^\prime) & = & det\left(\sum_{i=1}^{2n_t}\Delta s_i \textbf{A}_i \sum_{m=1}^{2n_t}\Delta s_m \textbf{A}_m^H\right) =  det\left(\sum_{i=1}^{2n_t} \sum_{m=1}^{2n_t} \Delta s_i \Delta s_m \textbf{A}_i \textbf{A}_m^H\right).
\end{eqnarray*}
Note that because of the nature of construction of the weight matrices, we have
\begin{equation*}
\textbf{A}_i\textbf{A}_m^H = \textbf{A}_{\frac{pn_t}{2}+i}\textbf{A}_{\frac{pn_t}{2}+m}^H, ~~~~ i,m \in \left\{1,2,\cdots,\frac{n_t}{2}\right\},~~~ p \in \{1,2,3\}.
\end{equation*}
Further, since the code is 4-group decodable,
\begin{eqnarray*}
\Delta(\textbf{S},\textbf{S}^\prime) & = & det\left(\sum_{ p=0}^{3}\left(\sum_{i=\frac{pn_t}{2}+1}^{\frac{(p+1)n_t}{2}} \Delta s_i^2\textbf{I}_{n_t} +2 \sum_{i=\frac{pn_t}{2}+1}^{\frac{(p+1)n_t}{2}-1}\sum_{m=i + 1}^{\frac{(p+1)n_t}{2}}\Delta s_i\Delta s_m \textbf{A}_i\textbf{A}_m^H\right)\right).
\end{eqnarray*}
All the weight matrices in the first group are Hermitian and pairwise commuting and the product of any two such matrices is some other matrix in the same group.  It is well known that commuting matrices are simultaneously diagonalizable. Hence,
\begin{equation*}
\textbf{A}_i = \textbf{E}\textbf{D}_i\textbf{E}^H, ~~~ i \in \left\{2,3,\cdots, \frac{n_t}{2} \right\},
\end{equation*}
where $\textbf{D}_i$ is a diagonal matrix. Since $\textbf{A}_i$ is Hermitian as well as unitary, the diagonal elements of $\textbf{D}_i$ are $\pm 1$. The following lemma proves that $\textbf{A}_i$ is traceless.
\begin{lemma}\label{lemma4}
Let $\textbf{F}_i, i = 1,2,\cdots,2a$ be $2^a \times 2^a$ unitary, pairwise anticommuting matrices. Then, the product matrix $\textbf{F}_1^{\lambda_1}\textbf{F}_2^{\lambda_2}\cdots\textbf{F}_{2a}^{\lambda_{2a}}, \lambda_i \in \{0,1\}, i=1,2,\cdots, 2a$, with the exception of $\textbf{I}_{2^a}$, is traceless.
\end{lemma}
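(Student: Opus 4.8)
The plan is to exploit the pairwise anticommutation relations to show that any nontrivial product $\textbf{M} = \textbf{F}_1^{\lambda_1}\textbf{F}_2^{\lambda_2}\cdots\textbf{F}_{2a}^{\lambda_{2a}}$ (with not all $\lambda_i$ zero) can be conjugated into its own negative, forcing $tr(\textbf{M}) = -tr(\textbf{M}) = 0$. Concretely, pick any index $k$ with $\lambda_k = 1$, so $\textbf{F}_k$ actually appears in $\textbf{M}$, and consider conjugating $\textbf{M}$ by $\textbf{F}_k$: since $\textbf{F}_k$ anticommutes with each $\textbf{F}_i$ for $i \neq k$ and commutes with itself, moving $\textbf{F}_k$ past the product $\textbf{M}$ picks up a factor $(-1)$ for each of the $s-1$ factors of $\textbf{M}$ other than $\textbf{F}_k$, where $s = \sum_i \lambda_i$ is the number of factors. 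So if $s$ is even, $\textbf{F}_k\textbf{M}\textbf{F}_k^{-1} = (-1)^{s-1}\textbf{M} = -\textbf{M}$, and taking traces gives $tr(\textbf{M}) = tr(\textbf{F}_k\textbf{M}\textbf{F}_k^{-1}) = -tr(\textbf{M})$, hence $tr(\textbf{M}) = 0$.

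The remaining case is $s$ odd. Here conjugation by a single $\textbf{F}_k$ fixes $\textbf{M}$, so I need a different conjugator. The natural choice is to conjugate by some $\textbf{F}_\ell$ with $\lambda_\ell = 0$, i.e. an $\textbf{F}_\ell$ not appearing in $\textbf{M}$: then $\textbf{F}_\ell$ anticommutes with every one of the $s$ factors of $\textbf{M}$, giving $\textbf{F}_\ell\textbf{M}\textbf{F}_\ell^{-1} = (-1)^s\textbf{M} = -\textbf{M}$ since $s$ is odd, and again $tr(\textbf{M}) = 0$. Such an $\ell$ exists precisely when $s < 2a$, i.e. when $\textbf{M}$ is not the full product $\textbf{F}_1\textbf{F}_2\cdots\textbf{F}_{2a}$. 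So the only potentially troublesome case is $s = 2a$ — but $2a$ is even, which was already handled by the first argument. Hence every nontrivial product is traceless.

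The main obstacle — really the only subtlety — is making sure the two cases exhaust all nontrivial products: one must observe that if $s$ is odd then $s \le 2a - 1 < 2a$ so a missing generator always exists, and if $s$ is even the first argument applies regardless of whether any generator is missing. It is also worth recording that conjugation is legitimate here because each $\textbf{F}_i$ is unitary (in particular invertible), so $\textbf{F}_k^{-1} = \textbf{F}_k^H$ exists and $tr(\textbf{F}_k\textbf{M}\textbf{F}_k^{-1}) = tr(\textbf{M})$ by the cyclic property of the trace. I would present the argument in exactly this order: first fix $s = \sum_i \lambda_i \ge 1$; handle $s$ even by conjugating with an $\textbf{F}_k$ that appears in $\textbf{M}$; then handle $s$ odd (so $s \le 2a-1$) by conjugating with an $\textbf{F}_\ell$ that does not appear in $\textbf{M}$; conclude $tr(\textbf{M}) = 0$ in both cases. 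No heavy computation is needed — the anticommutation bookkeeping is identical in spirit to the sign counts already used in the proofs of Lemma \ref{lemma_2} and Lemma \ref{lemma_3}.
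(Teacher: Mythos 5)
Your proof is correct and is essentially the paper's own argument: both rest on the observation that if an invertible matrix anticommutes with $\textbf{M}$ then conjugation gives $tr(\textbf{M})=-tr(\textbf{M})=0$, and then exhibit such an anticommuting partner for every nontrivial product. The only difference is cosmetic — the paper invokes Lemma \ref{lemma_3} to assert that a suitable anticommuting product matrix exists, whereas you construct one explicitly as a single generator ($\textbf{F}_k$ appearing in $\textbf{M}$ when $s$ is even, $\textbf{F}_\ell$ absent from $\textbf{M}$ when $s$ is odd), which makes the case-exhaustion self-contained.
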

\begin{proof}
It is well known that $tr(\textbf{AB})=tr(\textbf{BA})$ for any two matrices $\textbf{A}$ and $\textbf{B}$. Let $\textbf{A}$ and $\textbf{B}$ be two invertible, $n \times n$ anticommuting matrices. Then, $\textbf{A}\textbf{B}\textbf{A}^{-1} = -\textbf{B}$. So,
\begin{eqnarray*}
tr(\textbf{A}\textbf{B}\textbf{A}^{-1}) = -tr(\textbf{B}) \Leftrightarrow tr(\textbf{B}) = -tr(\textbf{B}).
\end{eqnarray*}
\begin{equation}\label{trace}
\therefore tr(\textbf{B}) = 0.
\end{equation}
Similarly, it can be shown that $tr(\textbf{A}) = 0$. By applying Lemma \ref{lemma_3}, it can be seen that any product matrix $\textbf{F}_1^{\lambda_1^\prime}\textbf{F}_2^{\lambda_2^\prime}\cdots \textbf{F}_{2a}^{\lambda_{2a}^\prime}$, excluding $\textbf{I}_{2^a}$,  anticommutes with some other invertible product matrix from the set $\{\textbf{F}_1^{\lambda_1}\textbf{F}_2^{\lambda_2}\cdots\textbf{F}_{2a}^{\lambda_{2a}}, \lambda_i \in \{0,1\}, i = 1,2,3,\cdots,2a\}$. Hence, from \eqref{trace}, we can say that every product matrix $\textbf{F}_1^{\lambda_1}\textbf{F}_2^{\lambda_2}\cdots\textbf{F}_{2a}^{\lambda_{2a}}$ except $\textbf{I}_{2^a}$ is traceless.
\end{proof}

 From the above lemma, $\textbf{A}_i$ except identity is traceless. Hence, $\textbf{D}_i$ has an equal number of '1's and '-1's.
In fact, because of the nature of construction of the matrices $\textbf{F}_i,i=1,2,\cdots,2a$, the product matrices $\textbf{F}_i\textbf{F}_{i+1}$, for even $i$, and the product matrix  $\textbf{F}_1\textbf{F}_{2}\textbf{F}_{3}$ are always diagonal (easily seen from the definition of $\textbf{F}_i$, $i=1,2,\cdots,2a$). Hence, all the weight matrices of the first group excluding $\textbf{A}_1 = \textbf{I}_{n_t}$ are diagonal, with the diagonal elements being $\pm 1 $. Since these diagonal matrices also commute with $\textbf{F}_2$ and $\textbf{F}_3$, the diagonal entries are such that for every odd $i$, if the $(i,i)^{th}$ entry is 1(-1), then, the $(i+1,i+1)^{th}$ entry is also 1(-1, respectively).  To summarize, the properties of $\textbf{A}_i$, $i=2,\cdots, \frac{n_t}{2}$ are listed below.
\begin{eqnarray}
\nonumber
 \textbf{A}_i  =   \textbf{A}_i^H; &~~~& \textbf{A}_i^2  =  \textbf{I}_{n_t}, \\
\nonumber
\textbf{A}_i(m,n)  =  0, ~~ m \neq n; &~~~& \textbf{A}_i(j,j) = \pm 1,~~j=1,2,\cdots, n_t,\\
\label{p5}
tr(\textbf{A}_i) & = & 0, \\
\label{p6}
\textbf{A}_i(j,j) & = & \textbf{A}_i(j+1,j+1), ~~  j = 1, 3, 5,\cdots,n_t-1, 
\end{eqnarray}
\begin{eqnarray}
\label{p7}
\textbf{A}_i \textbf{A}_j & = & \textbf{A}_k, ~~~~ i,j,k \in \left\{1,2,\cdots,\frac{n_t}{2}\right\}.
\end{eqnarray}

In view of these properties,
\begin{equation*}
\Delta(\textbf{S},\textbf{S}^\prime) =  det\left(\sum_{p=0}^{3}\left(\sum_{i=\frac{pn_t}{2}+1}^{\frac{(p+1)n_t}{2}} \Delta s_i^2\textbf{I}_{n_t} + 2\sum_{i=\frac{pn_t}{2}+1}^{\frac{(p+1)n_t}{2}-1}\sum_{m=i + 1}^{\frac{(p+1)n_t}{2}}\Delta s_i \Delta s_m \textbf{D}_{im}\right) \right),
\end{equation*}
where $\textbf{D}_{im} = \textbf{A}_i\textbf{A}_m =  \textbf{A}_k$ for some $k \in \{1,2,\cdots,\frac{n_t}{2} \}$, and
\begin{equation}\label{prod_sum}
\Delta(\textbf{S},\textbf{S}^\prime) =  \prod_{j=1}^{n_t}\sum_{p=0}^{3}\left( \sum_{i=1}^{\frac{n_t}{2}}d_{ij} \Delta s_{\frac{pn_t}{2}+i}\right)^2,
\end{equation}
where $d_{ij} = \pm 1$ and $d_{1j} = 1$. In fact, $d_{ij} = \textbf{A}_i(j,j)$, $i = 1,2, 3, \cdots, \frac{n_t}{2}$. From \eqref{prod_sum}, $\Delta(\textbf{S},\textbf{S}^\prime)$ is a product of the sum of squares and it is minimized when only one group, say $p=0$, gives a non-zero contribution. Hence,
\begin{equation*}
\min_{\textbf{S},\textbf{S}^\prime}(\Delta(\textbf{S},\textbf{S}^\prime))  = \min_{\Delta s_i} \left(\prod_{j=1}^{n_t}\left(\sum_{i=1}^{\frac{n_t}{2}}d_{ij} \Delta s_i\right)^2\right),
\end{equation*}
where $\underset{x}{\operatorname{min}}(y)$ denotes the minimum value of $y$ over all possible values of $x$.
From \eqref{p6},
\begin{equation}\label{tem}
\min_{\textbf{S},\textbf{S}^\prime}(\Delta(\textbf{S},\textbf{S}^\prime))  =  \min_{\Delta s_i}\left(\prod_{j=1}^{\frac{n_t}{2}}\left(\sum_{i=1}^{\frac{n_t}{2}}d_{i(2j-1)} \Delta s_i\right)^4\right).
\end{equation}
We need the minimum determinant to be as high a non-zero number as possible. In this regard, let
\begin{equation}\label{wmatrix}
\textbf{W} \triangleq \sqrt{\frac{2}{n_t}}[w_{ij}], ~~  w_{ij} = d_{i(2j-1)}, ~~ i,j = 1,2,\cdots, \frac{n_t}{2}
\end{equation}
and
\begin{equation*}
\textbf{y}_p \triangleq [y_{_{\frac{n_tp}{2}+1}},y_{_{\frac{n_tp}{2}+2}},\cdots, y_{_{\frac{n_t(p+1)}{2}}}]^T = \textbf{W} [s_{_{\frac{n_tp}{2}+1}},s_{_{\frac{n_tp}{2}+2}},\cdots, s_{_{\frac{n_t(p+1)}{2}}}]^T, ~~ p = 0,1,2,3.
\end{equation*}

\begin{lemma}
$\textbf{W}$ as defined in \eqref{wmatrix} is an orthogonal matrix.
\end{lemma}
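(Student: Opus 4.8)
The plan is to verify directly that $\textbf{W}\textbf{W}^T = \textbf{I}_{n_t/2}$, computing the product in the order that indexes the inner sum by the weight-matrix index $i$ rather than by the position index $j$. This choice is what makes the argument short: each entry of $\textbf{W}\textbf{W}^T$ becomes (a multiple of) the trace of a product of two of the diagonal matrices $\textbf{A}_1,\dots,\textbf{A}_{n_t/2}$, and the properties \eqref{p5}, \eqref{p6}, \eqref{p7} pin this trace down completely.

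First I would write the $(i,m)$-th entry as $(\textbf{W}\textbf{W}^T)(i,m) = \tfrac{2}{n_t}\sum_{j=1}^{n_t/2} w_{ij}w_{mj} = \tfrac{2}{n_t}\sum_{j=1}^{n_t/2} d_{i(2j-1)}\,d_{m(2j-1)}$, recalling from the discussion after \eqref{prod_sum} that $d_{ij} = \textbf{A}_i(j,j)$. Next, \eqref{p6} states $\textbf{A}_i(2j-1,2j-1) = \textbf{A}_i(2j,2j)$, so each summand over an odd position equals the summand over the adjacent even position, giving $\sum_{j=1}^{n_t/2} d_{i(2j-1)}d_{m(2j-1)} = \tfrac12\sum_{\ell=1}^{n_t}\textbf{A}_i(\ell,\ell)\textbf{A}_m(\ell,\ell)$. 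Since every $\textbf{A}_i$ with $i\le n_t/2$ is diagonal and Hermitian, the last sum is exactly $\tfrac12\,tr(\textbf{A}_i\textbf{A}_m^H) = \tfrac12\,tr(\textbf{A}_i\textbf{A}_m)$.

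Then I would invoke \eqref{p7}: $\textbf{A}_i\textbf{A}_m = \textbf{A}_k$ for some $k\in\{1,2,\dots,n_t/2\}$. If $i=m$, then $\textbf{A}_i\textbf{A}_m = \textbf{A}_i^2 = \textbf{I}_{n_t}$, of trace $n_t$. If $i\neq m$, then $\textbf{A}_k\neq\textbf{I}_{n_t}$ — for $\textbf{A}_i\textbf{A}_m=\textbf{I}_{n_t}$ would force $\textbf{A}_i=\textbf{A}_m^{-1}=\textbf{A}_m$, contradicting the linear independence (hence distinctness) of the weight matrices — so $k\neq 1$ and $\textbf{A}_k$ is traceless by \eqref{p5} (Lemma \ref{lemma4}). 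Combining, $(\textbf{W}\textbf{W}^T)(i,m) = \tfrac{2}{n_t}\cdot\tfrac12\cdot n_t\,\delta_{im} = \delta_{im}$, i.e. $\textbf{W}\textbf{W}^T=\textbf{I}_{n_t/2}$, so $\textbf{W}$ is orthogonal.

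I do not expect a real obstacle here; the only steps needing care are the reduction of the half-length sum over odd positions to a full trace (which is precisely the role of \eqref{p6}) and the elementary observation that $\textbf{A}_i\textbf{A}_m$ cannot equal the identity when $i\neq m$. As an aside, if one instead tried to check $\textbf{W}^T\textbf{W}=\textbf{I}_{n_t/2}$ column-wise, one would have to argue that the sign patterns $\ell\mapsto\textbf{A}_i(\ell,\ell)$ at distinct odd positions are distinct characters of the elementary abelian $2$-group generated by $\textbf{A}_1,\dots,\textbf{A}_{n_t/2}$ and use orthogonality of characters; that also works, via the same linear-independence fact, but is the longer route, so I would present the $\textbf{W}\textbf{W}^T$ computation above.
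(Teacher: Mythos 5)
Your proof is correct and follows essentially the same route as the paper's: reduce the inner product of the sign vectors to $\tfrac{1}{n_t}tr(\textbf{A}_i\textbf{A}_m)$ via \eqref{p6}, then apply \eqref{p7} and the tracelessness property \eqref{p5} (Lemma \ref{lemma4}) to get $\delta_{im}$. The only additions are cosmetic: you make explicit the observation (left implicit in the paper) that $\textbf{A}_i\textbf{A}_m\neq\textbf{I}_{n_t}$ when $i\neq m$, and you phrase the computation as $\textbf{W}\textbf{W}^T=\textbf{I}$ rather than as pairwise column inner products.
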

\begin{proof}
From \eqref{wmatrix}, it can be noted that the columns of $\textbf{W}$ are obtained from the diagonal elements of $\textbf{A}_i$, $i=1,2,\cdots,\frac{n_t}{2}$. Each element of a column $i$ of $\textbf{W}$ corresponds to every odd numbered diagonal element of $\textbf{A}_i$. Denote the $i^{th}$ column of $\textbf{W}$ by $\textbf{w}_i$. Applying \eqref{p6}, \eqref{p7} and \eqref{p5} in that order,
\begin{eqnarray*}
\langle \textbf{w}_i,\textbf{w}_j \rangle = \frac{1}{n_t}tr(\textbf{A}_i\textbf{A}_j) = \frac{1}{n_t} tr(\textbf{A}_k)= \delta_{ij},
\end{eqnarray*}
where
\begin{equation*}
\delta_{ij} = \left\{ \begin{array}{ccc}
 0, & \textrm{if} & i \neq j,\\
 1, & \textrm{otherwise}. & \\
\end{array} \right.
\end{equation*}
Hence, $\textbf{W}$ is orthogonal.
\end{proof}

Substituting $\textbf{y}_p$ in \eqref{tem}, we get
\begin{equation*}
\min_{\textbf{S},\textbf{S}^\prime}(\Delta(\textbf{S},\textbf{S}^\prime))  =  \min_{\textbf{y}_0} \left(\prod_{j=1}^{\frac{n_t}{2}} y_j^4\right).
\end{equation*}

So, the minimum determinant is a power of the minimum product distance in $n_t/2 $ real dimensions. If $\textbf{y}_p \in \mathbb{Z}^{\frac{n_t}{2}}$, the product distance can be maximized by premultiplying $\textbf{y}_p$ with a suitable orthogonal rotation matrix $\textbf{V}$ given in \cite{full_div_rot}.  This operation maximizes the minimum determinant and hence the coding gain. So, the $2n_t$ real symbols of the rate-1, 4-group decodable code are encoded by grouping $\frac{n_t}{2}$ real symbols each into 4 groups and each group of symbols taking value from a unitarily rotated vector belonging to $\mathbb{Z}^{\frac{n_t}{2}}$, the rotation matrix being $\textbf{W}^T\textbf{V}$. For 4 transmit antennas,
\begin{equation*}
\textbf{W} = \frac{1}{\sqrt{2}}\left[ \begin{array}{rr}
1 & -1  \\
1 &  1 \\
\end{array}\right], ~~~
\textbf{V} = \left[ \begin{array}{rr}
 0.8507 &  -0.5257 \\
 0.5257  &  0.8507\\
\end{array}\right],
\end{equation*}
and for 8 transmit antennas,
\begin{equation*}
\textbf{W} = \frac{1}{2}\left[ \begin{array}{rrrr}
1 & -1 & -1 & 1 \\
1 &  1 &  1 & 1\\
1 & -1 &  1 & -1\\
1 &  1 & -1 & -1 \\
\end{array}\right], ~~~
\textbf{V} = \left[ \begin{array}{rrrr}
-0.3664 & -0.7677 &  0.4231  & 0.3121 \\
-0.2264 & -0.4745 & -0.6846 & -0.5050 \\
-0.4745 &  0.2264 & -0.5050  & 0.6846 \\
-0.7677 & 0.3664 & 0.3121 & -0.4231 \\
\end{array}\right].
\end{equation*}

If the practically used square QAM constellation of size $M$ is used, encoding is done as follows : the $n_t$ complex symbols in each codeword matrix take values from the $M$-QAM and are split into two groups, one group consisting of the real parts of the $n_t$ symbols and the other group consisting of the imaginary parts. Each group is further divided into two subgroups, each consisting of $n_t/2$ real symbols. So, in all, there are 4 groups consisting of $n_t/2$ real symbols. As used before, denoting the column vectors consisting of the symbols in a group by $\textbf{y}_p$, $p = 0,1,2,3$ (the entries of $\textbf{y}_p$  take values independently from $\sqrt{M}$-PAM), let $\textbf{s}_p = \textbf{W}^T\textbf{V}\textbf{y}_p$, where $\textbf{W}$ and $\textbf{V}$ are as explained before. Then the codeword matrix is given by
\begin{equation*}
\textbf{S} =  \sum_{p=0}^{3} \sum_{i=1}^{\frac{n_t}{2}} s_{_{\frac{pn_t}{2}+i}}\textbf{A}_{\frac{pn_t}{2} + i}.
\end{equation*}
Note that the above codeword matrix can also be expressed as 
\begin{equation}\label{new_weight}
\textbf{S} =  \sum_{p=0}^{3} \sum_{i=1}^{\frac{n_t}{2}} y_{_{\frac{pn_t}{2}+i}}\textbf{A}_{\frac{pn_t}{2} + i}^\prime,
\end{equation}
where $\textbf{A}_{\frac{pn_t}{2} + i}^\prime = \sum_{j=1}^{\frac{n_t}{2}}\omega_{ji}\textbf{A}_{\frac{pn_t}{2} +j}$, $p=0,1,2,3$, with $\omega_{ji}$ being the $(j,i)^{th}$ element of $\textbf{W}^T\textbf{V}$. Clearly, the weight matrices $\textbf{A}_{\frac{pn_t}{2} + i}^\prime$, $p=0,1,2,3$, satisfy the condition
\begin{center}
 $\textbf{A}_{\frac{ln_t}{2} + i}^\prime \left(\textbf{A}_{\frac{mn_t}{2} + j}^\prime\right)^H + \textbf{A}_{\frac{mn_t}{2} + j}^\prime \left(\textbf{A}_{\frac{ln_t}{2} + i}^\prime \right)^H = \textbf{O}_{n_t}$,  for $0 \leq l < p \leq 3$, and $i,j = 0,1, \cdots, \frac{n_t}{2}$. 
\end{center}
Consequently, the ML-decoding complexity of the code is of the order of $M^{\frac{n_t-2}{4}}$. This is because there are four groups consisting of $n_t/2$ real symbols each and the symbols in each group can be decoded independently from the symbols in the other groups. In decoding the symbols in the same group jointly, one needs to make a search over $\sqrt{M}^{\frac{n_t}{2}} = M^{\frac{n_t}{4}}$ possibilities for the symbols, since the real and the imaginary parts of a signal point in a square $M$-QAM have only $\sqrt{M}$ possible values each (the real and the imaginary parts of a signal point of a square $M$-QAM take values from a $\sqrt{M}$-PAM constellation). However, one need not make an exhaustive search over all the possible $M^{\frac{n_t}{4}}$ values for the $n_t/2$ symbols. For every possible value of the first $\frac{n_t}{2} -1$ real symbols, the last symbol is evaluated by {\it quantization} \cite{SrR_arxiv}. Hence, the worst case ML-decoding complexity is of the order of $\sqrt{M}^{\frac{n_t}{2} -1} = M^{\frac{n_t-2}{4}} $ only. Fig. \ref{fig_4gp} gives a comparison of the symbol error rate for the proposed STBC, the 4-group decodable STBC proposed by Yuen et al. \cite{4gp1} and the 4-group-decodable STBC proposed by Rajan \cite{4gp2}, all for the $8\times 1$ MIMO system. The plots reveal that all the STBCs have the same performance for QAM constellations. Independently, we have computed that all the three codes have the same minimum determinant for QAM constellations.

\section{Extension to higher number of receive antennas}\label{sec5}
When $n_r = 1$, a rate-1, 4-group decodable STBC is the best full-rate STBC possible in terms of ML-decoding complexity and as a result, ergodic capacity. However, when $n_r > 1$, we need more weight matrices to meet the full-rate criterion. In literature, there does not exist a 4-group decodable STBC with rate greater than 1. So, it is unlikely, though not proven, that there exists a full-rate, multi-group ML-decodable STBC with full-diversity for $n_r > 1$. So, for $n_r >1$, we relax the requirement of multi-group decodability and simply aim for some reduction in the ML-decoding complexity and self-interference. Let $n_t = 2^a$. We know that if $\textbf{F}_i,i=1,2,\cdots,2a$ are pairwise anticommuting, invertible matrices, then, the set $\mathcal{F} \triangleq  \{\textbf{F}_1^{\lambda_1}\textbf{F}_2^{\lambda_2}\cdots\textbf{F}_{2a}^{\lambda_{2a}}$, with $\lambda_i \in \{0,1\}, i = 1,2,\cdots,2a\}$ is linearly independent over $\mathbb{C}$. Hence, the set $\mathcal{M} = \{ \mathcal{F}, j\mathcal{F} \}$ is linearly independent over $\mathbb{R}$. As a result, the elements of $\mathcal{M}$ can be used as weight matrices of a full-rate STBC for $n_r > 1$. Keeping in view that the self-interference has to be minimized, it is important to choose the weight matrices judiciously. The idea is that given a full-rate STBC for $n_r-1$ receive antennas, obtain the additional weight matrices of a full-rate STBC for $n_r$ receive antennas by using the weight matrices of a rate-1, 4-group decodable STBC such that after the addition of the new weight matrices, the set consisting of the weight matrices of the rate-$n_r$ code is linearly independent over $\mathbb{R}$. This is achieved as follows.
\begin{enumerate}
\item Obtain a rate-1, 4-group decodable STBC by using the construction detailed in Section \ref{sec4}. Due to the nature of the construction, the product of any two weight matrices is always some other weight matrix of the code, up to negation. Denote the set of weight matrices by $\mathcal{G}_1$. %
\item From the set $\mathcal{F}$, choose a matrix that does not belong to $\mathcal{G}_1$ and multiply it with the elements of $\mathcal{G}_1$ to obtain a new set of weight matrices, denoted by $\mathcal{G}_2$. Clearly, the two sets will not have any matrix in common. To see this, let $\textbf{A} \in \mathcal{G}_1$ and $\textbf{B} \in \mathcal{F} \bigcap (\mathcal{M}\setminus \mathcal{G}_1)$, where $\textbf{B}$ is the matrix chosen to be multiplied with the elements of $\mathcal{G}_1$. Let $\textbf{BA} = \textbf{C} \in \mathcal{G}_1$. Hence, $\textbf{B} = \textbf{C}\textbf{A}^H = \pm \textbf{CA}$ and $\textbf{CA}$ belongs to $\mathcal{G}_1$, up to negation. This contradicts the fact that $\textbf{B} \in \mathcal{F} \bigcap (\mathcal{M}\setminus \mathcal{G}_1)$. So, $\textbf{C}$ cannot belong to $\mathcal{G}_1$.

The weight matrices of $\mathcal{G}_2$ form a new, rate-1, 4-group decodable STBC. This is because the ML-decoding complexity does not change by multiplying the weight matrices of a code with a unitary matrix. In this case, we have multiplied the elements of $\mathcal{G}_1$ with an element of $\mathcal{F}$, which is a unitary matrix. Now, $\mathcal{G}_1 \bigcup \mathcal{G}_2$ is the set of weight matrices of a rate-2 code with an ML-decoding complexity of $M^{n_t}.M^{\frac{n_t-2}{4}} = M^{\frac{5n_t-2}{4}}$. This is achieved by decoding the last $n_t$ symbols with a complexity of $M^{n_t}$ and then conditionally decoding the first $n_t$ symbols using the 4-group decodability property as explained in Section \ref{sec4A}.
\item For increasing $n_r$, repeat as in the second step, obtaining new rate-1, 4-group decodable codes and then appending their weight matrices to obtain a new, rate-$n_r$ code with an ML-decoding complexity of  $M^{n_t(n_r-\frac{3}{4})-0.5}$. The new set of weight matrices is $\bigcup_{i=1}^{n_r}\mathcal{G}_{i}$.
\item When all the elements of $\mathcal{F}$ have been exhausted (this occurs when $n_r = n_t/2$), Step 3 can be continued till $n_r = n_t$ by choosing the  matrices that are to be multiplied with the elements of $\mathcal{G}_1$ from $j\mathcal{F}\bigcap (\mathcal{M}\setminus \bigcup_{i=1}^{n_r-1}\mathcal{G}_{i})$. Note from Lemma \ref{lemma_1} that this does not spoil the linear independence of the weight matrices over $\mathbb{R}$.
\end{enumerate}

{\it Note} : In the case of the Perfect codes for $n_t$ transmit antennas, a layer \cite{ORBV}, \cite{new_per} corresponds to $n_t$ complex symbols. In case of our generalized Silver codes, a layer corresponds to a rate-1, 4-group decodable code encoding $n_t$ complex symbols. Also, the Silver code for an $n_t \times n_r$ system refers to the STBC containing $n_{min} = min(n_t,n_r)$ individual rate-1, 4-group decodable codes, a property due to which self-interference is greatly reduced compared with other known full-rate codes.

\subsection{An illustration for $n_t = 4$}\label{illustration}
For $n_t =4$, let $\textbf{F}_1, \textbf{F}_2, \textbf{F}_3$ and $\textbf{F}_4$ be the four anticommuting, anti-Hermitian matrices obtained by the method presented in \cite{TiH}. Let $\mathcal{F} = \{\textbf{F}_1^{\lambda_1}\textbf{F}_2^{\lambda_2} \textbf{F}_3^{\lambda_3} \textbf{F}_4^{\lambda_4}, \lambda_i \in \{0,1\}, i = 1,2,3,4\}$. The rate-1, 4-group decodable code has the following 8 weight matrices, with weight matrices in each column belonging to the same group:
\begin{center}
\begin{tabular}{|c|c|c|c|}
\hline
$\textbf{I}_4$   &  $\textbf{F}_1$   &  $\textbf{F}_2$  & $\textbf{F}_3 $ \\ \hline
$\textbf{F}_1\textbf{F}_2\textbf{F}_3$ & $-\textbf{F}_2\textbf{F}_3$  & $\textbf{F}_1\textbf{F}_3$  & $-\textbf{F}_1\textbf{F}_2$ \\ \hline
\end{tabular}
\end{center}

Hence, $\mathcal{G}_1 = \{ \textbf{I}_4,\textbf{F}_1,\textbf{F}_2,\textbf{F}_3,\textbf{F}_1\textbf{F}_2\textbf{F}_3,-\textbf{F}_2\textbf{F}_3,\textbf{F}_1\textbf{F}_3,-\textbf{F}_1\textbf{F}_2\}$, Now, we choose a matrix from $\mathcal{F}$ which does not belong to $\mathcal{G}_1$. One such matrix is $\textbf{F}_4$. Pre-multiplying all the elements of $\mathcal{G}_1$ with $\textbf{F}_1$ and applying the anticommuting property, we obtain a new rate-1, 4-group decodable code, whose weight matrices are as follows:
\begin{center}
\begin{tabular}{|c|c|c|c|}
\hline
$\textbf{F}_4$   &  $-\textbf{F}_1\textbf{F}_4$   &  $-\textbf{F}_2\textbf{F}_4$  & $-\textbf{F}_3\textbf{F}_4 $ \\ \hline
$-\textbf{F}_1\textbf{F}_2\textbf{F}_3\textbf{F}_4$ & $-\textbf{F}_2\textbf{F}_3\textbf{F}_4$  & $\textbf{F}_1\textbf{F}_3\textbf{F}_4$  & $-\textbf{F}_1\textbf{F}_2\textbf{F}_4$ \\ \hline
\end{tabular}
\end{center}
Hence, $\mathcal{G}_2 = \textbf{F}_4\mathcal{G}_1 = \{\textbf{F}_4,-\textbf{F}_1\textbf{F}_4,$ $-\textbf{F}_2\textbf{F}_4,-\textbf{F}_3\textbf{F}_4,-\textbf{F}_1\textbf{F}_2\textbf{F}_3\textbf{F}_4,-\textbf{F}_2\textbf{F}_3\textbf{F}_4, \textbf{F}_1\textbf{F}_3\textbf{F}_4,-\textbf{F}_1\textbf{F}_2\textbf{F}_4\}$ and $\mathcal{G}_1 $ $\bigcup \mathcal{G}_2$ is the set of weight matrices of the rate-2 STBC, which is full rate with an ML-decoding complexity of the order of $M^{4.5}$.

Now, since there are no more elements left in $\mathcal{F}$ (neglecting negation), we can choose elements from $j\mathcal{F}$. To construct a rate-3 code for 3 transmit antennas, we multiply the elements of $\mathcal{G}_1$ by $j\textbf{I}_4$ to obtain the set $\mathcal{G}_3 = j\mathcal{G}_1$. The weight matrices of the rate-3 code constitute the set $\mathcal{G}_1 \bigcup \mathcal{G}_2 \bigcup \mathcal{G}_3$. Similarly, the weight matrices of a full-rate code for $n_r \geq 4$ are the elements of the set $ \mathcal{G}_1 \bigcup \mathcal{G}_2 \bigcup \mathcal{G}_3  \bigcup \mathcal{G}_4$, where $\mathcal{G}_4 = j\textbf{F}_4\mathcal{G}_1 = j\mathcal{G}_2$. It is obvious that $\mathcal{G}_1$, $\mathcal{G}_2$, $\mathcal{G}_3$ and $\mathcal{G}_4$ represent the weight matrices of four individual rate-1, 4-group decodable codes, respectively.

\subsection{Structure of the $\textbf{R}$-matrix and ML-decoding complexity}
The popular sphere decoding \cite{sphere_decoding} technique is used to perform the ML-decoding of linear dispersion STBCs utilizing lattice constellations. A QR-decomposition of $\textbf{H}_{eq}$, the equivalent channel matrix, is performed to obtain $\textbf{H}_{eq} = \textbf{QR}$ and the ML-decoding metric is given by
\begin{equation*}
 \textbf{M}\left(\textbf{s}\right) = \left \Vert \widetilde{vec(\textbf{Y})} - \sqrt{\frac{SNR}{n_t}}\textbf{H}_{eq}\textbf{s} \right\Vert^2  = \left \Vert \textbf{y}^\prime - \sqrt{\frac{SNR}{n_t}}\textbf{R}\textbf{s}  \right \Vert^2,
\end{equation*}
where $\textbf{y}^\prime = \textbf{Q}^T\widetilde{vec(\textbf{Y})}$. The $\textbf{R}$-matrix of the Silver code for the $n_t \times n_r$ system has the following structure, irrespective of the channel realization:
\begin{equation*}\label{rmatrix}
 \textbf{R} =  \left[\begin{array}{cccc}
\textbf{D} &  \textbf{X} &  \ldots &  \textbf{X} \\
\textbf{O}_{2n_t} &  \textbf{D} &  \ldots &  \textbf{X} \\
\vdots &   \ddots  & \ddots & \vdots \\
\textbf{O}_{2n_t} & \textbf{O}_{2n_t} & \ldots & \textbf{D} \\
\end{array} \right]
\end{equation*}
where $\textbf{X} \in \mathbb{R}^{2n_t \times 2n_t}$ is a random non-sparse matrix whose entries depend on the channel coefficients and $\textbf{D}  = \textbf{I}_4 \otimes \textbf{T} $, with $\textbf{T} \in \mathbb{R}^{\frac{n_t}{2} \times \frac{n_t}{2}}$ being an upper triangular matrix. The reason for this structure is that the weight matrices of the Silver code for an $n_t \times n_r$ system are also the weight matrices of $min(n_t,n_r)$ separate rate-1, 4-group decodable codes (as illustrated in Sec. \ref{sec5}). As a result of the structure of $\textbf{D}$, the $\textbf{R}$-matrix has a large number of zeros in the upper block, and hence, compared to other existing codes, the generalized Silver codes have lower average ML-decoding complexity. The worst case ML-decoding complexity is of the order of $(M^{n_t(n_{min}-1)})(M^{\frac{n_t-2}{4}}) = M^{n_t(n_{min}-\frac{3}{4})-0.5}$, which is because in decoding the symbols, a search is to be made over all possible values of the last $n_t(n_{min}-1)$ complex symbols (which requires a complexity of the order of $M^{n_t(n_{min}-1)}$), while the remaining $n_t$ symbols can be {\it conditionally decoded} with a complexity of $M^{\frac{n_t-2}{4}}$ only, once the last $n_t(n_{min}-1)$ symbols are fixed (a detailed explanation on conditional ML-decoding has been presented in \cite{BHV}, \cite{SrR_arxiv}). In simple words, to decode the Silver code, one does not need a $2n_tn_{min}$ dimensional real sphere decoder. All one requires is a $2n_t(n_{min}-1)$ dimensional real sphere decoder in conjunction with four parallel $(n_t-2)/2$ dimensional real sphere decoders. The decrease in the ML-decoding complexity is evident from the decrease in the dimension of the real sphere decoder from $2n_tn_{min}$ to $2n_t(n_{min}-1) +\frac{n_t-2}{2}$. 

\subsection{Information Losslessness for $n_r \geq n_t$}
 For $n_r \geq n_t$, the Silver code is information lossless because its normalized generator matrix (normalization is done to ensure an appropriate $SNR$ at each receive antenna) is orthogonal. To see this, the generator matrix for $n_r \geq n_t$ is given as
\begin{equation*}
\textbf{G} = \frac{1}{\sqrt{n_t}} [\widetilde{vec(\textbf{A}_1)}\ \widetilde{vec(\textbf{A}_2)}\ \cdots \ \widetilde{vec(\textbf{A}_{2n_t^2})}],
\end{equation*}
where $\textbf{A}_i \in \mathcal{M}, i=1,2,\cdots,2n_t^2$, are the weight matrices obtained as mentioned in Sec. \ref{sec5}, with $\mathcal{M} = \{\mathcal{F}, j\mathcal{F}\}$, where $\mathcal{F} = \{\textbf{F}_1^{\lambda_1}\textbf{F}_2^{\lambda_2}\cdots\textbf{F}_{2a}^{\lambda_{2a}}, \lambda_i \in \{0,1\}, i = 1,2,3,\cdots,2a\}$. For $i,j \in\{1,2,\cdots, 2n_t^2\}$, we have
\begin{eqnarray}
\label{col_orth}
 \langle \widetilde{vec(\textbf{A}_i)}, \widetilde{vec(\textbf{A}_j)} \rangle &  = & {\rm{real}}\left(tr\left(\textbf{A}_i^H\textbf{A}_j\right)\right)\\
\label{gen1}
& = &  \pm {\rm{real}}\left(tr(\textbf{A}_i\textbf{A}_j)\right) \\
\nonumber
& = & \left\{ \begin{array}{ll}
 {\rm{real}}\left(tr(\textbf{I}_{n_t})\right) & {\rm{if}}~ i=j\\
 {\rm{real}}\left(tr(j\textbf{I}_{n_t})\right) & {\rm{if}} ~\textbf{A}_i = j\textbf{A}_j\\
\pm {\rm{real}}\left(tr(\textbf{A}_k)\right) &  {\rm{otherwise, where}}~ \pm \textbf{A}_k \in \mathcal{M}\setminus\{\textbf{I}_{n_t},j \textbf{I}_{n_t}\}\\
\end{array}\right.\\
\label{gen2}
& = & n_t\delta_{ij}.
\end{eqnarray}
Equation \eqref{gen1} holds because $\textbf{A}_i$, $i = 1,\cdots, 2n_t^2$ are either Hermitian or anti-Hermitian, and \eqref{gen2} follows from Lemma \ref{lemma4}.

\begin{lemma}\label{lemma_trace}
 Tracelessness of the self-interference matrix is equivalent to column orthogonality of the generator matrix.
\end{lemma}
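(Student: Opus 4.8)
The plan is to peel off the quadratic-form structure of the self-interference matrix and match it term by term against the Gram matrix of the columns of $\textbf{G}$. First I would write, with $\textbf{S}_{ij} \triangleq \textbf{A}_i\textbf{A}_j^H + \textbf{A}_j\textbf{A}_i^H$ as in Section \ref{sec3}, the self-interference matrix of Definition \ref{def8} as $\textbf{S}^{int} = \sum_{1 \leq i < j \leq 2k} s_i s_j \textbf{S}_{ij}$, so that $tr(\textbf{S}^{int}) = \sum_{1 \leq i < j \leq 2k} s_i s_j\, tr(\textbf{S}_{ij})$. This is a quadratic form in the independent real symbols $s_1,\cdots,s_{2k}$, and requiring it to vanish for every admissible symbol vector is the same as requiring $tr(\textbf{S}_{ij}) = 0$ for every pair $i < j$. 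So it suffices to prove, for each $i \neq j$, that $tr(\textbf{S}_{ij}) = 0$ if and only if $\langle \widetilde{vec(\textbf{A}_i)}, \widetilde{vec(\textbf{A}_j)} \rangle = 0$.

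The single computation that does the work is this. Since $\textbf{A}_j\textbf{A}_i^H = (\textbf{A}_i\textbf{A}_j^H)^H$ and $tr(\textbf{M}^H)$ equals the complex conjugate of $tr(\textbf{M})$ for any square $\textbf{M}$, one gets
\begin{equation*}
tr(\textbf{S}_{ij}) = tr(\textbf{A}_i\textbf{A}_j^H) + tr(\textbf{A}_j\textbf{A}_i^H) = 2\,{\rm{real}}\left(tr(\textbf{A}_i\textbf{A}_j^H)\right).
\end{equation*}
On the other side, by the definition of the $\widetilde{vec}(\cdot)$ operation and of the real inner product, the $(i,j)$ entry of the Gram matrix of the columns of $\textbf{G}$ is $\langle \widetilde{vec(\textbf{A}_i)}, \widetilde{vec(\textbf{A}_j)} \rangle = {\rm{real}}\left(tr(\textbf{A}_i^H\textbf{A}_j)\right)$, which is exactly \eqref{col_orth}. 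Using the cyclic property of the trace and conjugation once more, ${\rm{real}}\left(tr(\textbf{A}_i^H\textbf{A}_j)\right) = {\rm{real}}\left(tr(\textbf{A}_j\textbf{A}_i^H)\right) = {\rm{real}}\left(tr(\textbf{A}_i\textbf{A}_j^H)\right)$, and therefore
\begin{equation*}
tr(\textbf{S}_{ij}) = 2\,\langle \widetilde{vec(\textbf{A}_i)}, \widetilde{vec(\textbf{A}_j)} \rangle .
\end{equation*}

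From this identity the lemma follows at once: $tr(\textbf{S}_{ij}) = 0$ for all $i < j$ --- equivalently, $\textbf{S}^{int}$ is traceless for every choice of information symbols --- if and only if every off-diagonal inner product of the columns of $\textbf{G}$ vanishes, i.e. $\textbf{G}$ has orthogonal columns; multiplying $\textbf{G}$ by the scalar normalization constant does not affect this. I do not expect a genuine obstacle here; the only point that needs care is the first step, namely being explicit that ``tracelessness of the self-interference matrix'' is to be read as an identity in the symbol variables, which is what lets the condition decouple into the individual constraints $tr(\textbf{S}_{ij}) = 0$. Once that is granted, the rest is the one-line trace manipulation above.
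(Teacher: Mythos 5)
Your proposal is correct and follows essentially the same route as the paper: take the trace of $\textbf{S}^{int}$ term by term, use $tr(\textbf{A}_i\textbf{A}_j^H + \textbf{A}_j\textbf{A}_i^H) = 2\,{\rm{real}}\left(tr(\textbf{A}_i\textbf{A}_j^H)\right)$, and identify this with the column inner product via \eqref{col_orth}. Your explicit remark that tracelessness must be read as an identity in the symbol variables (so the quadratic form decouples into the pairwise conditions $tr(\textbf{S}_{ij})=0$) is a point the paper leaves implicit, but the substance of the argument is identical.
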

\begin{proof}
 Using the definition of the self-interference matrix $\textbf{S}^{int}$, given in Definition \ref{def8}, 
\begin{eqnarray}
\nonumber
 tr\left(\textbf{S}^{int}\right)& = &\sum_{i=1}^{2k-1} \sum_{j>i}^{2k}s_is_j tr\left[\left(\textbf{A}_i\textbf{A}_j^H + \textbf{A}_j\textbf{A}_i^H \right)\right]  = 2\sum_{i=1}^{2k-1} \sum_{j>i}^{2k}s_is_j\left( {\rm{real}}\left[tr\left(\textbf{A}_i\textbf{A}_j^H\right)\right]\right) \\
\label{line1}
& = &2\sum_{i=1}^{2k-1} \sum_{j>i}^{2k}s_is_j \langle \widetilde{vec(\textbf{A}_i)}, \widetilde{vec(\textbf{A}_j)} \rangle,
\end{eqnarray}
where \eqref{line1} follows from \eqref{col_orth}. From \eqref{line1}, it is clear that column orthogonality of the generator matrix is equivalent to tracelessness of the self-interference matrix.
\end{proof}

Recall that the second criterion  given to maximize $I_2$ (given by \eqref{coefficient_2}) requires that $\textbf{S}_{ij} = \textbf{A}_i\textbf{A}_j^H +  \textbf{A}_j\textbf{A}_i^H$, $i \neq j$, be traceless. It is clear from Lemma \ref{lemma_trace} that for our STBCs, $\textbf{S}_{ij}$ is traceless for $i \neq j$.

\subsection{The Silver code for two transmit antennas}

The Silver code \cite{tirk_combined}, \cite{PGA} for two antennas, which is well known for being a low complexity, full-rate, full-diversity STBC for $n_r \geq 2$, transmits 2 complex symbols per channel use. A codeword matrix of the Silver code is given as
\begin{equation*}
\textbf{S} =  \left[ \begin{array}{rr}
s_1+js_2   &  s_3+js_4  \\
-s_3+js_4  & s_1 -js_2  \\
\end{array}\right] + j\left[ \begin{array}{rr}
s_5+js_6   &  s_7+js_8  \\
-s_7+js_8  & s_5 -js_6  \\
\end{array}\right]\textbf{U},
\end{equation*}
where
\begin{equation*}
\textbf{U} = \frac{1}{\sqrt{7}}\left[\begin{array}{cc}
1 + j & 1 + 2j \\
-1+2j &  1 - j \\
\end{array}\right ].
\end{equation*}

The codeword encodes 8 real symbols $s_1,s_2,\cdots,s_8$, each taking values independently from a regular $\sqrt{M}$-PAM constellation. The first four weight matrices are that of the Alamouti code, given by
\begin{equation*}
\textbf{A}_1 =  \left[ \begin{array}{rr}
1   &  0  \\
0  & 1 \\
\end{array}\right], ~ \textbf{A}_2 =  \left[ \begin{array}{rr}
j   &  0  \\
0  & -j \\
\end{array}\right], ~ \textbf{A}_3 =  \left[ \begin{array}{rr}
0   &  1  \\
-1  & 0\\
\end{array}\right], ~ \textbf{A}_4 =  \left[ \begin{array}{rr}
0   &  j  \\
j & 0 \\
\end{array}\right].
\end{equation*}
Note that the Alamouti code is 4-group decodable for 2 transmit antennas. The Silver code's next 4 weight matrices are obtained by multiplying the first four weight matrices by $j$. To make the code achieve full-diversity with the highest possible coding gain, post-multiplication by $\textbf{U}$ is performed. It can be checked that $\textbf{U} = \frac{1}{\sqrt{7}}(\textbf{A}_1+\textbf{A}_2+\textbf{A}_3+2\textbf{A}_4)$. Effectively, the last 4 weight matrices of the silver code are $j\textbf{A}_i\textbf{U}$, $i=1,\cdots,4$, which also form another rate-1, 4-group decodable code. The unitary matrix $\textbf{U}$ is so cleverly chosen that in addition to providing full-diversity with a high coding gain, the generator matrix is orthogonal (which can be checked using \eqref{gen1}), making the code information lossless for $n_r \geq 2$. The Silver code compares very well with the well known Golden code in error performance, while offering lower ML-decoding complexity of the order of $M^2$.

\subsection{Achievability of Full-diversity}
The following theorem, (Theorem I, \cite{lakshmi}) guarantees that full-diversity is possible for the generalized Silver codes with the real symbols taking values from PAM constellations, denoted by $\mathcal{A}_{PAM}$.
\begin{theorem}\label{thm3}
 For any given $n \times n$ square linear design $\mathcal{S} \triangleq \left\{\textbf{S} = \sum_{i=1}^{k}s_i \textbf{A}_i ~ \vert~ s_i \in \mathcal{A}_{PAM} ,  ~i = 1,2, \right.$ $\left.\cdots,k\right\}$, encoding $k$ real symbols with full-rank weight matrices $\textbf{A}_i$, there exist $\alpha_i \in \mathbb{C}$, $i = 1,\cdots, k$, such that the STBC $\mathcal{S}^\prime \triangleq \left\{ \textbf{S} = \sum_{i=1}^{k}s_i\alpha_i\textbf{A}_i ~ \vert ~ s_i \in \mathcal{A}_{PAM}, ~ i = 1,2,\cdots,k  \right\}$ offers full diversity.
\end{theorem}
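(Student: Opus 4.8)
The plan is to reduce the statement to a genericity argument: the ``full-diversity'' requirement for a fixed choice of $(\alpha_1,\dots,\alpha_k)$ is a countable conjunction of polynomial non-vanishing conditions in the parameters $\alpha_i$, and I will show that the set of parameter values satisfying all of them simultaneously has full measure (in particular is nonempty).

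First I would recall the rank criterion: an $n\times n$ linear STBC achieves full diversity (over any number of receive antennas) if and only if every nonzero codeword difference matrix $\Delta\textbf{S} \triangleq \sum_{i=1}^{k}\Delta s_i\,\alpha_i\textbf{A}_i$ has full rank $n$, i.e. $\det(\Delta\textbf{S})\neq 0$, where $\Delta s_i = s_i - s_i'$ with $s_i,s_i'\in\mathcal{A}_{PAM}$. Since $\mathcal{A}_{PAM}$ is finite (at worst countable), the set $\mathcal{U}$ of admissible nonzero difference vectors $\textbf{u}=(\Delta s_1,\dots,\Delta s_k)$ is finite (countable). So it suffices to find $(\alpha_1,\dots,\alpha_k)\in\mathbb{C}^k$ such that $\det\!\big(\sum_{i=1}^k u_i\alpha_i\textbf{A}_i\big)\neq 0$ for every $\textbf{u}\in\mathcal{U}$.

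Next, for a fixed $\textbf{u}\in\mathcal{U}$ I would view $p_{\textbf{u}}(\alpha_1,\dots,\alpha_k) \triangleq \det\!\big(\sum_{i=1}^k u_i\alpha_i\textbf{A}_i\big)$ as a polynomial in the complex variables $\alpha_1,\dots,\alpha_k$ (homogeneous of degree $n$). The crucial observation is that $p_{\textbf{u}}$ is not the zero polynomial: picking any index $i_0$ with $u_{i_0}\neq 0$ and evaluating at $\alpha_{i_0}=1$, $\alpha_i=0$ for $i\neq i_0$, gives $p_{\textbf{u}} = u_{i_0}^{\,n}\det(\textbf{A}_{i_0})\neq 0$, since $\textbf{A}_{i_0}$ is full-rank by hypothesis. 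Writing $\alpha_i=\alpha_{iI}+j\alpha_{iQ}$, the real and imaginary parts of $p_{\textbf{u}}$ are real polynomials in the $2k$ real variables and cannot both vanish identically, so the zero locus $Z_{\textbf{u}}\triangleq\{\alpha : p_{\textbf{u}}(\alpha)=0\}$ is contained in the zero set of a nonzero real polynomial, hence has Lebesgue measure zero in $\mathbb{R}^{2k}$.

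Finally I would take the union $\bigcup_{\textbf{u}\in\mathcal{U}}Z_{\textbf{u}}$; being a finite (countable) union of measure-zero sets, it has measure zero, so its complement is nonempty. Any $(\alpha_1,\dots,\alpha_k)$ in this complement makes every nonzero $\Delta\textbf{S}$ full-rank, i.e. the resulting STBC $\mathcal{S}'$ achieves full diversity; taking $\textbf{u}$ to be (a scalar multiple of) each standard basis vector also forces $\alpha_i\neq 0$ automatically, and a final common positive rescaling of the $\alpha_i$ restores any desired energy normalization without affecting ranks. The only mildly delicate step is the non-degeneracy claim $p_{\textbf{u}}\not\equiv 0$, which I expect to be the point to state carefully; as indicated it follows at once from the full-rankness of a single weight matrix, after which everything reduces to the standard ``almost every parameter works'' argument (one could equally invoke the Baire category theorem in place of the measure-theoretic statement).
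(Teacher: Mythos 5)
Your proof is correct; note that the paper itself offers no proof of this statement, quoting it as Theorem I of \cite{lakshmi}, so the only comparison available is with that reference. Your argument --- the rank criterion, non-vanishing of each determinant polynomial $p_{\textbf{u}}$ witnessed by evaluating at a single full-rank weight matrix $\textbf{A}_{i_0}$ with $u_{i_0}\neq 0$, and a finite union of measure-zero zero loci having nonempty complement --- is precisely the standard genericity argument used there, and every step is sound.
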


Since all the weight matrices of the generalized Silver code are either Hermitian or anti-Hermitian and hence full-ranked, full-diversity is achievable with the generalized Silver codes. However, finding out explicitly the values of $\alpha_i$ is an open problem. For the full-rate codes for 1 receive antenna, in Section \ref{sec4A}, we have identified the encoding scheme which not only provides full-diversity, but also maximizes the coding gain for PAM constellations. For the generalized Silver codes for higher number of receive antennas, each layer, corresponding to a rate-1, 4-group decodable code, is encoded as explained in \ref{sec4A}. Note from \eqref{new_weight} that this type of encoding neither reduces the number of matrix pairs satisfying Hurwitz-Radon orthogonality nor spoils the column orthogonality of the Generator matrix. In addition, we use a certain scaling factor to be multiplied with a certain subset of weight matrices to enhance the coding gain. The choice of the scaling factor is based on computer search. With the use of the scaling factor, the generalized Silver codes perform very well when compared with the punctured Perfect codes. Although we cannot mathematically prove that our codes have full-diversity with the constellation that we have used for simulation, the simulation plots seem to suggest that our codes have full-diversity, since the error performance of our codes matches that of the comparable punctured Perfect codes, which have been known to have full-diversity.

\section{Simulation results}\label{sec6}
In all the simulation scenarios in this section, we consider the Rayleigh block fading MIMO channel.
\subsection{4 Tx}
We consider three MIMO systems - $4 \times 2$, $4 \times 3$ and $4\times4$ systems. The codes are constructed as illustrated in Subsection  \ref{illustration}. To enhance the performance of our code for the $4 \times 2$ system, we have multiplied the weight matrices of $\mathcal{G}_2$ (as defined in Subsection  \ref{illustration}) with the scalar $e^{j\pi/4}$. This is done primarily to enhance the coding gain, which was observed to be the highest when the scalar $e^{j\pi/4}$ was chosen. It is to be noted that this action does not alter the ML-decoding complexity and the column orthogonality of the generator matrix (so, the resultant weight matrices still satisfy the tracelessness criterion). Consequently, the weight matrices of the Silver code for the $4 \times 2$ system can be viewed to be from $\mathcal{G}_1 \bigcup e^{j\pi/4} \mathcal{G}_2$. For the $4 \times 3$ MIMO system, the weight matrices of the Silver code are from the set $\mathcal{G}_1 \bigcup e^{j\pi/4}\mathcal{G}_2 \bigcup j\mathcal{G}_1$, while the weight matrices of the Silver code for the $4 \times 4 $ system are from the set  $\mathcal{G}_1 \bigcup e^{j\pi/4}\mathcal{G}_2 \bigcup j\mathcal{G}_1 \bigcup je^{j\pi/4}\mathcal{G}_2 $. Fig. \ref{fig_cap} shows the plot of the maximum mutual information achievable with our codes and the punctured Perfect codes \cite{ORBV} for $4\times2$ and $4 \times 3$ systems. In both the cases, our codes allow higher mutual information than the punctured Perfect code, as was expected. Regarding error performance, we have chosen 4 QAM for our simulations and encoding is done as explained in Subsection \ref{sec4A}.
\begin{enumerate}
\item $4 \times 2$ MIMO \\
Fig. \ref{fig1} shows the plots of the symbol error rate (SER) as a function of the SNR at each receive antenna for five codes - the DjABBA code \cite{HTW}, the punctured Perfect code for 4 transmit antennas, the Silver code for the $4 \times 2$ system, the EAST code \cite{barry} and Oggier's code from crossed product Algebra with a provable NVD property \cite{Oggier_fast}. Since the number of degrees of freedom of the channel is only 2, we use the Perfect code with 2 of its 4 layers punctured. Our code and the EAST code have the best performance. It is to be noted that the curves for the Silver code for the $4 \times 2$ system and the EAST code coincide. Also, the Silver code for the $4 \times 2$ system is the same as the one presented in \cite{SrR_arxiv}, but has been designed using a new, systematic method. The Silver code for the $4 \times 2$ system and the EAST code have an ML-decoding complexity of the order of $M^{4.5}$ for square QAM constellation, while the DjABBA and Oggier's code have an ML-decoding complexity of order $M^{6}$ and $M^{5.5}$, respectively.
\item $4 \times 3$ MIMO \\
Fig. \ref{fig2} shows the plots of the SER as a function of the SNR at each receive antenna for two codes - the punctured perfect code (puncturing one of its 4 layers) and the Silver code for the $4 \times 3$ system.  The Silver code for the $4 \times 3$ system has a marginally better performance than the punctured perfect code in the low to medium SNR range. It has an ML-decoding complexity of the order of $M^{8.5}$ while that of the punctured Perfect code is $M^{11}$ (this reduction from $M^{12}$ to $M^{11}$ is due to the fact that the real and the imaginary parts of the last symbol can be evaluated by quantization, once the remaining symbols have been fixed).
\item $4 \times 4$ MIMO \\
Fig. \ref{fig3} shows the plots of the SER as a function of the SNR at each receive antenna for the Silver code for the $4 \times 4$ system and the Perfect code. The Silver code for the $4 \times 4$ system nearly matches the Perfect code in performance at low and medium SNR. More importantly, it has lower ML-decoding complexity of the order of $M^{12.5}$, while that of the Perfect code is $M^{15}$.
\end{enumerate}
\subsection{8 Tx}
 To construct the Silver code for the $8 \times 2$ system, we first construct a rate-1, 4-group decodable STBC as described in Section \ref{sec4} and denote the set of obtained weight matrices by $\mathcal{G}_1$. Next we multiply the weight matrices of $\mathcal{G}_1$ by $\textbf{F}_4$ to obtain a new set of weight matrices which is denoted by $\mathcal{G}_2$. The weight matrices of the Silver code for the $8 \times 2$ system are obtained from $\mathcal{G}_1 \bigcup \mathcal{G}_2$. The Silver code for the $8 \times 3$ system  can be obtained by multiplying the matrices of $\mathcal{G}_1$ with $\textbf{F}_6$ and appending the resulting weight matrices to the set $\mathcal{G}_1 \bigcup \mathcal{G}_2$. The rival code is the punctured perfect code for 8 transmit antennas \cite{new_per}. The maximum mutual information plots of the two codes are shown in Fig. \ref{fig_cap1}. As expected, our code has higher mutual information, although lower than the ergodic capacity of the corresponding MIMO channels.

Fig. \ref{fig82} shows the symbol error performance of the Silver code for $8 \times 2$ system and the punctured Perfect code \cite{new_per}. The constellation employed is 4-QAM. Again, to enhance performance by way of increasing the coding gain, we have multiplied the weight matrices of $\mathcal{G}_2$ with the scalar $e^{\frac{j\pi}{4}}$, as done for the codes for 4 transmit antennas. The simulation plot suggests that our code has full diversity. The most important aspect of our code is that it has an ML-decoding complexity of $M^{9.5}$, while that of the comparable punctured Perfect code is $M^{15}$.

\section{Discussion} \label{sec7}

In this paper, we proposed a scheme to obtain full-rate STBCs for $2^a$ transmit antennas and any number of receive antennas with the lowest ML-decoding complexity and the least self-interference among known codes. The STBCs thus obtained allow higher mutual information than existing STBCs for the case $n_r < n_t$. Identifying explicit constellations which can be mathematically proven to guarantee full-diversity and a non-vanishing determinant without increasing the ML-decoding complexity is an open problem. Also, one can seek to obtain full-rate STBCs with reduced ML-decoding complexity for arbitrary number of transmit (not a power of 2) and receive antennas. These are some of the directions for future research. %%%%%%%%%%%%%%%%%%%%%%%%%%%%%%%%%%%%%%%%%%%%%%%%%%%%%%%%%%%%%%%%%%%%%%%%

\section*{ACKNOWLEDGEMENT}
This work was partly supported by the DRDO-IISc program on Advanced Research in Mathematical Engineering through research grants and the INAE Chair Professorship to B. Sundar Rajan. We thank the anonymous reviewers for their useful comments which have greatly helped in enhancing the quality of the paper.
%%%%%%%%%%%%%%%%%%%%%%BIBLIOGRAPHY%%%%%%%%%%%%%%%%%%%

\newpage
\begin{figure}
\centering
\includegraphics[width=5.5in,height=3.5in]{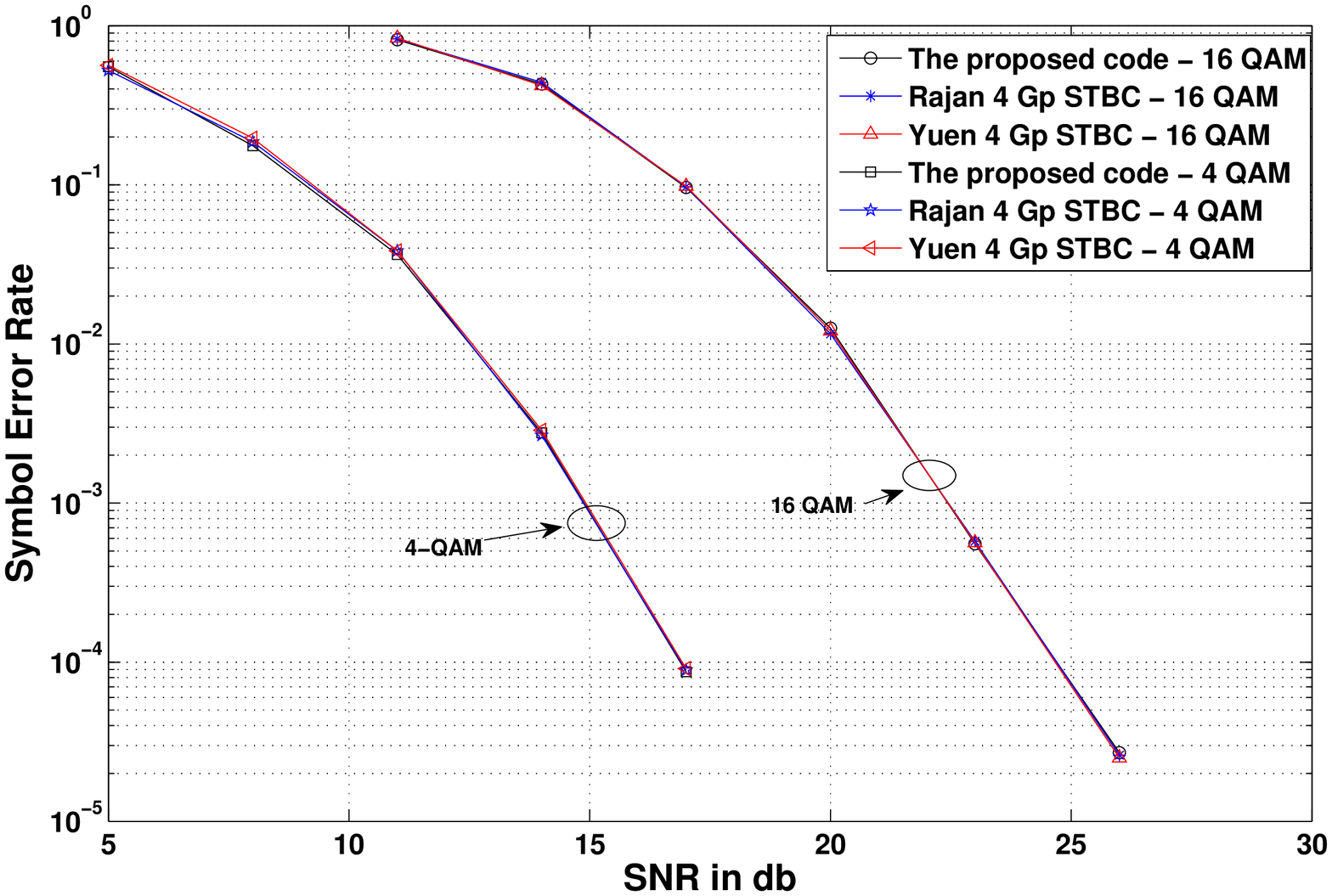}
\caption{SER comparison of the proposed STBC with a few known 4-group decodable STBCs for the $8\times1$ MIMO system}
\label{fig_4gp}
\end{figure}

\begin{figure}
\centering
\includegraphics[width=5.5in,height=3.5in]{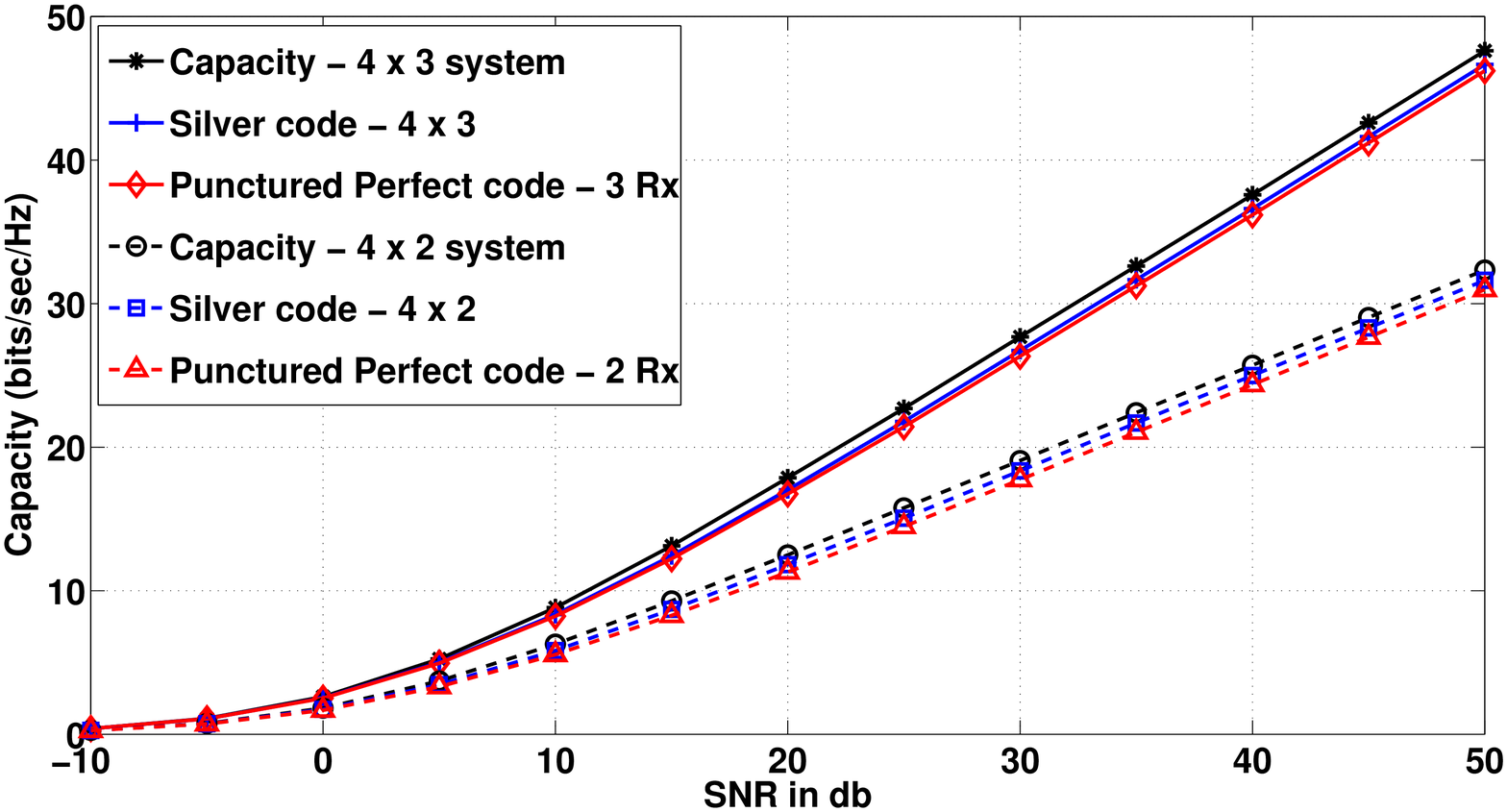}
\caption{Ergodic capacity Vs SNR for codes for $4 \times 2$ and $4 \times 3$ systems}
\label{fig_cap}
\end{figure}

\begin{figure}
\centering
\includegraphics[width=5.5in,height=3.5in]{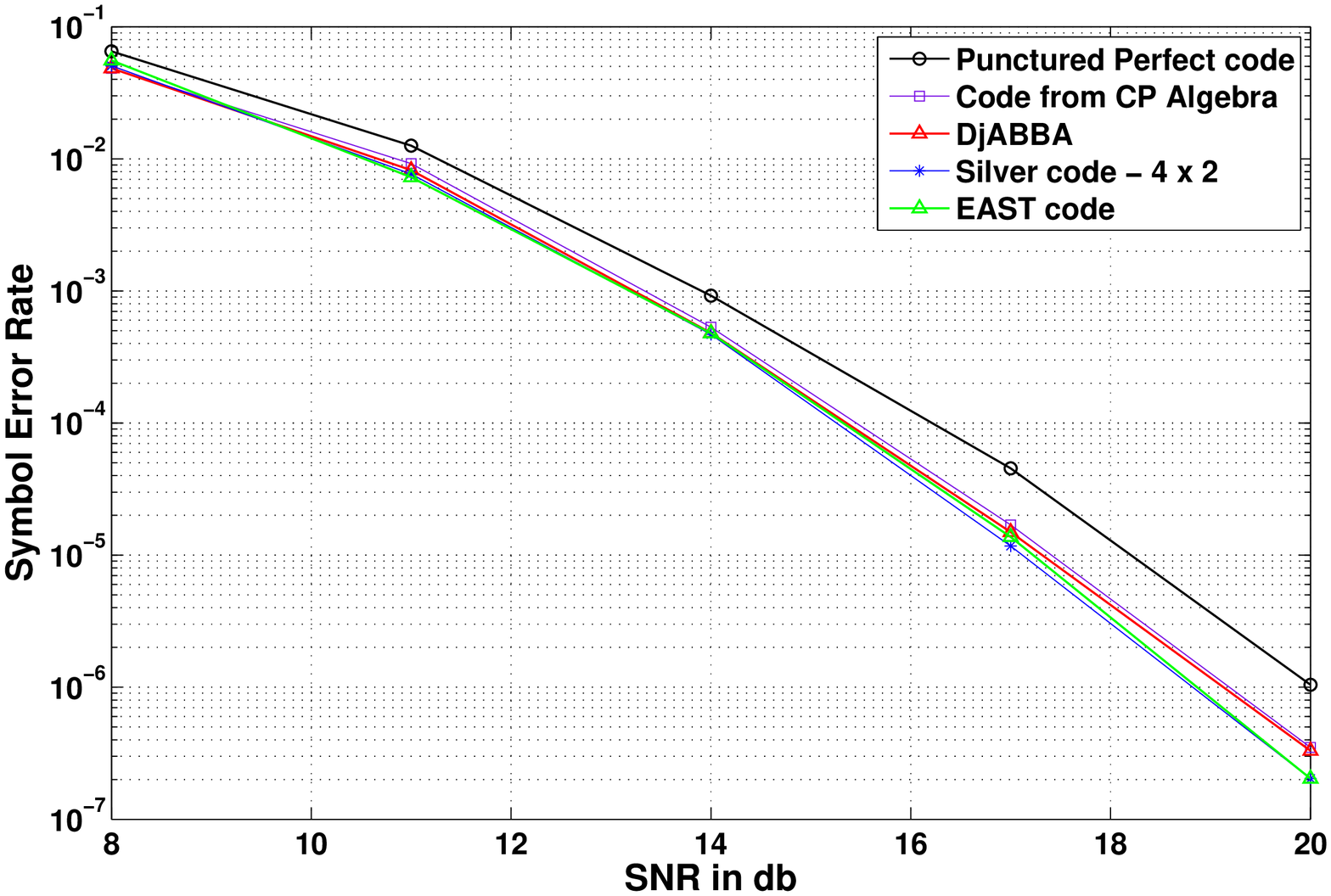}
\caption{SER performance at 4 BPCU for codes for $4 \times 2$ systems}
\label{fig1}
\end{figure}

\begin{figure}
\centering
\includegraphics[width=5.5in,height=3.5in]{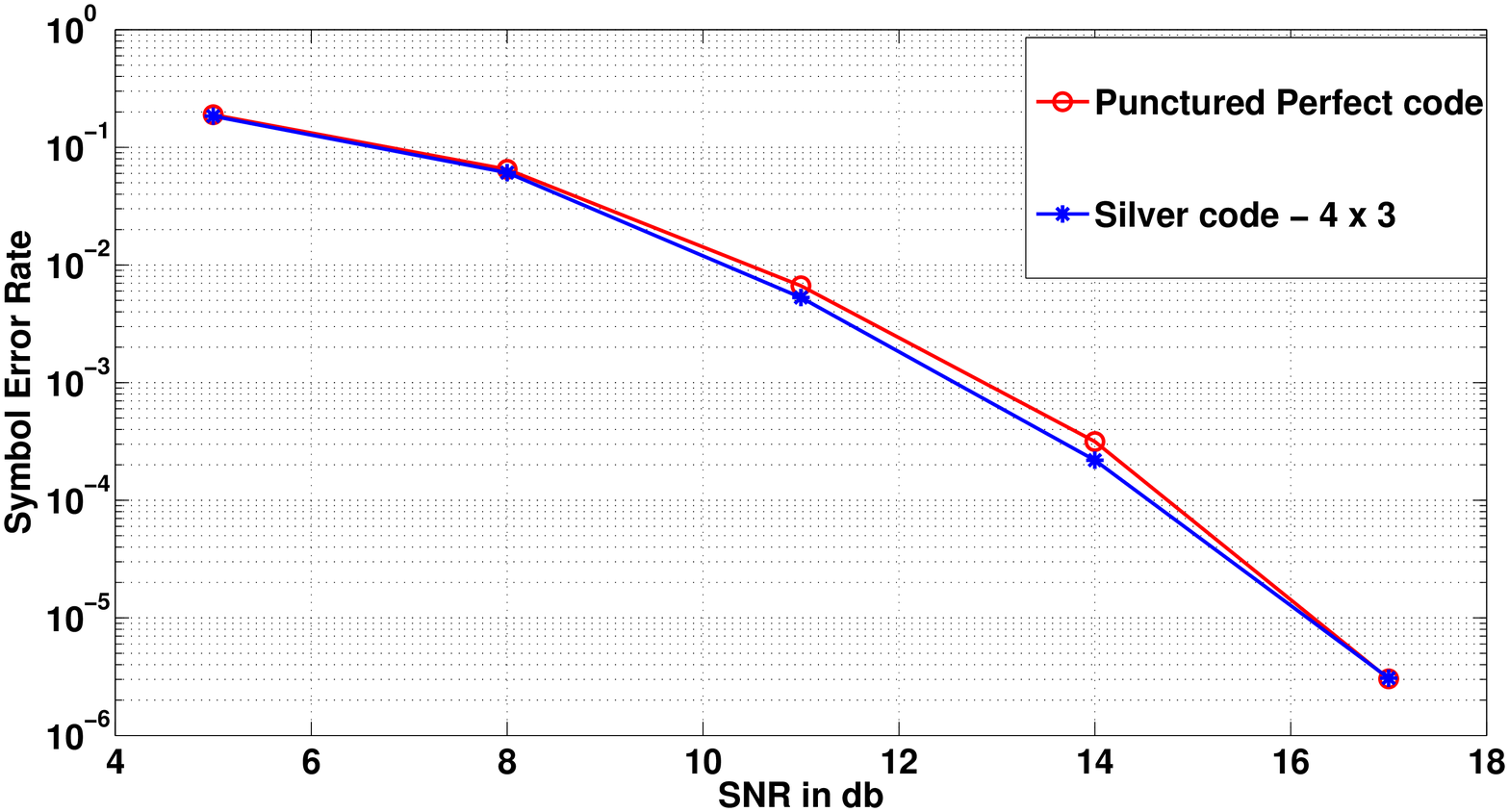}
\caption{SER performance at 6 BPCU for codes for $4 \times 3$ systems}
\label{fig2}
\end{figure}

\begin{figure}
\centering
\includegraphics[width=5.5in,height=3.5in]{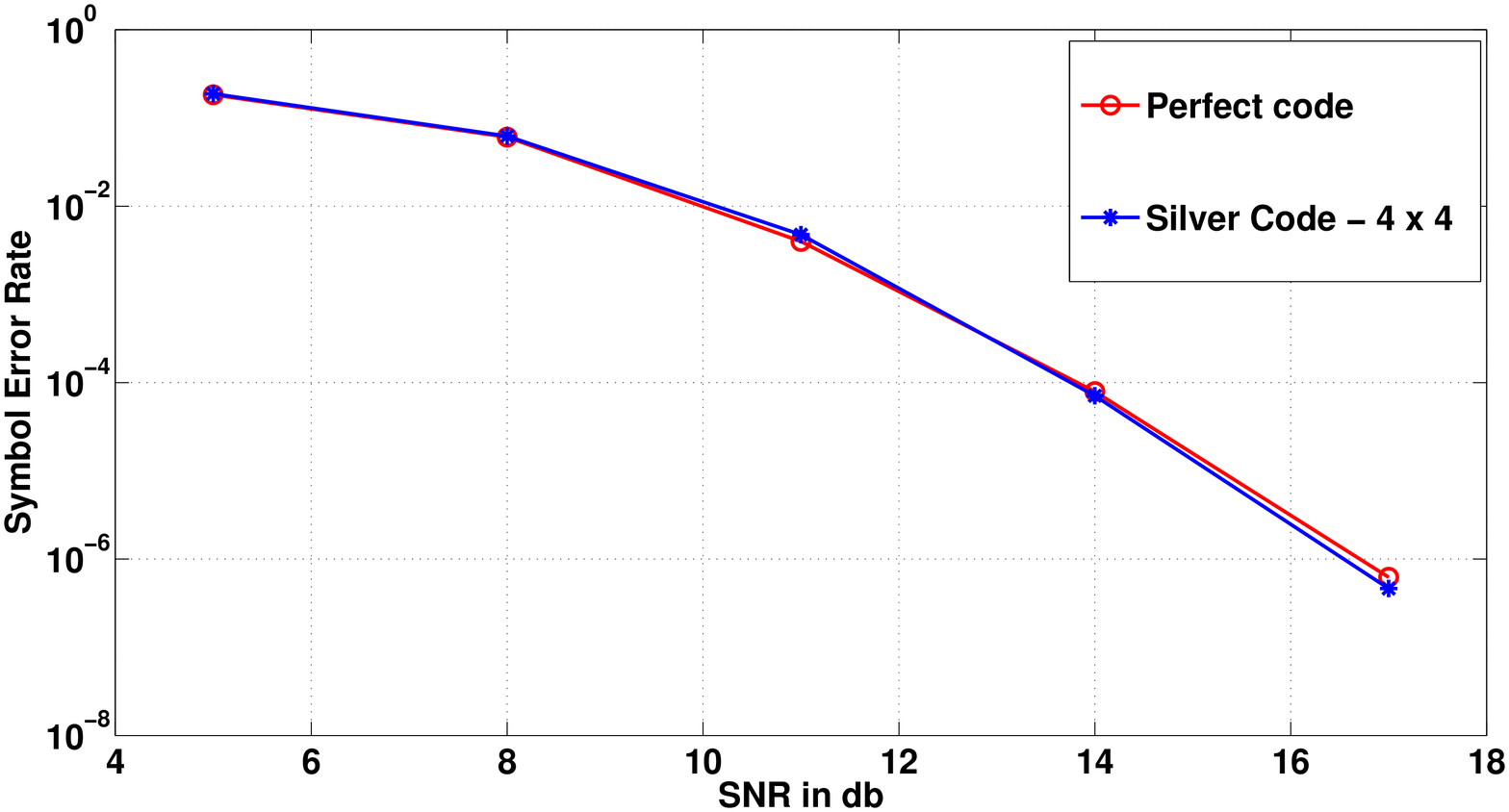}
\caption{SER performance at 8 BPCU for codes for $4 \times 4$ systems}
\label{fig3}
\end{figure}

\begin{figure}
\centering
\includegraphics[width=5.5in,height=3.5in]{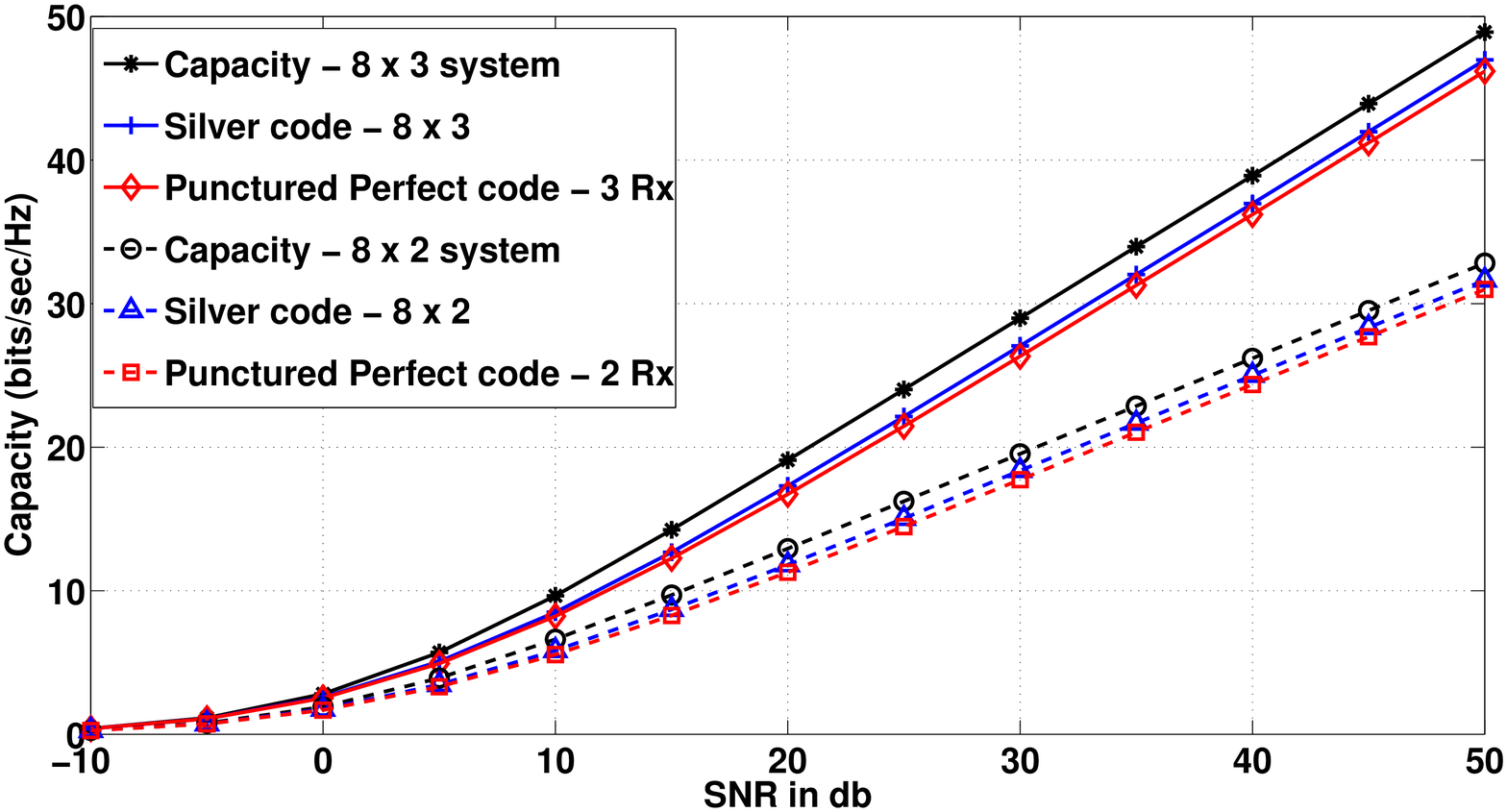}
\caption{Ergodic capacity Vs SNR for codes for $8 \times 2$ and $8 \times 3$ systems}
\label{fig_cap1}
\end{figure}

\begin{figure}
\centering
\includegraphics[width=5.5in,height=3.5in]{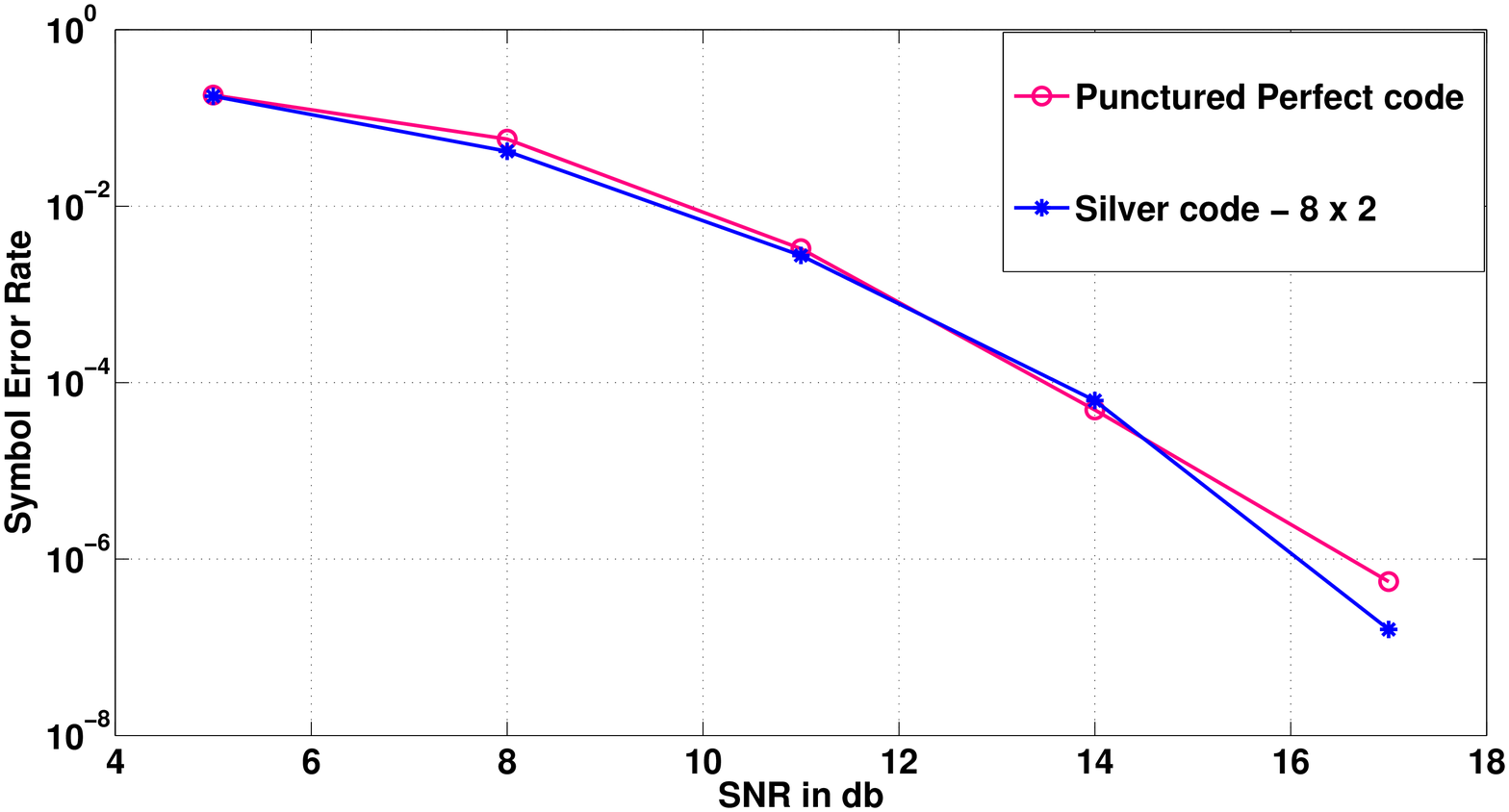}
\caption{SER performance at 4 BPCU for codes for $8 \times 2$ systems}
\label{fig82}
\end{figure}


\begin{thebibliography}{1}
\bibitem{TJC} V. Tarokh, H. Jafarkhani and A. R. Calderbank, ``Space-Time block codes from orthogonal designs,'' \emph{IEEE Trans. Inf. Theory,} vol.\ 45, no.\ 5, pp.\ 1456-1467, Jul. 1999. Also ``Correction to ``Space-time block codes from orthogonal designs,'' \emph{IEEE Trans. Inf. Theory}, vol.\ 46, no.\ 1, pp.\ 314, Jan. 2000.

\bibitem{TiH}
O. Tirkkonen and A. Hottinen, ``Square-matrix embeddable space-time block codes for
complex signal constellations,'' {\it IEEE Trans. Inf. Theory}, vol.\ 48, no.\ 2, pp.\ 384-395, Feb. 2002.

\bibitem{BRV}
J. C. Belfiore, G. Rekaya and E. Viterbo, ``The Golden Code: A $2\times2$ full rate space-time code with non-vanishing determinants,'' \emph{IEEE Trans. Inf. Theory}, vol.\ 51, no.\ 4, pp.\ 1432-1436, Apr. 2005.

\bibitem{SrR_arxiv}
K. P. Srinath and B. S. Rajan, ``Low ML-Decoding Complexity, Large Coding Gain, Full-Rate, Full-Diversity STBCs for $2\times 2$ and $4 \times 2$ MIMO Systems,'' \emph{IEEE Journal Sel. Topics Signal Process.}, vol.\ 3, no.\ 6, pp.\ 916-927, Dec. 2009.

\bibitem{john_barry1}
M. O. Sinnokrot and J. Barry, ``Fast Maximum-Likelihood Decoding of the Golden Code,'' \emph{IEEE Trans. Wireless Commun.}, vol.\ 9, no.\ 1, pp.\ 26-31, Jan. 2010.

\bibitem{Hollanti_silver}
C. Hollanti, J. Lahtonen, K. Ranto, R. Vehkalahti and E. Viterbo, ``On the algebraic structure of the Silver code: A $2\times2$ Perfect space-time code with non-vanishing determinant,'' in Proc. of \emph{IEEE Inf. Theory Workshop}, Porto, Portugal, May 2008.

\bibitem{tirk_combined}
O. Tirkkonen and R. Kashaev, ``Combined Information and Performance Optimization of Linear MIMO Modulations,'' in Proc. of \emph{ISIT 2002}, Lausanne, Switzerland, Jun. 30 - Jul. 8, 2002.

\bibitem{PGA}
J. Paredes,  A.B. Gershman,  M. G.-Alkhansari, `` 
A New Full-Rate Full-Diversity Space-Time Block Code With Nonvanishing Determinants and Simplified Maximum-Likelihood Decoding,'' \emph{IEEE Trans. Signal Process.,} vol.\ 56, no. 6, pp.\ 2461 - 2469, Jun. 2008.

\bibitem{Ray_silver}
A. Ray, K. Vinodh, G. R.-B. Othman and P. V. Kumar, ``Ideal structure of the silver code,'' in Proc. of \emph{ISIT 2009}, Seoul, South Korea, Jun. 28 - Jul. 03, 2009.

\bibitem{fixed_point_silver}
Y. Wu and L. Davis, ``Fixed-point fast decoding of the silver code,'' in \emph{IEEE Australian Communications Theory Workshop (AusCTW)}, 2009.

\bibitem{BHV}
E. Biglieri, Y. Hong and E. Viterbo, ``On Fast-Decodable Space-Time Block Codes,'' \emph{IEEE Trans. Inf. Theory}, vol.\ 55, no.\ 2, pp.\ 524-530, Feb. 2009.

\bibitem{HTW}
A. Hottinen, O. Tirkkonen and R. Wichman, ``Multi-antenna Transceiver Techniques for 3G and Beyond,'' Wiley publisher, UK, 2003.

\bibitem{JH}
H. Jafarkhani, ``A quasi-orthogonal space-time block code,'' \emph{IEEE Trans. Commun.}, vol.\ 49, no.\ 1, pp.\ 1-4, Jan. 2001.

\bibitem{ZS}
Z. A. Khan and B. S. Rajan, ``Single Symbol Maximum Likelihood Decodable Linear STBCs,'' \emph{IEEE Trans. Inf. Theory}, vol.\ 52, no.\ 5, pp.\ 2062-2091, May 2006.

\bibitem{Robert}
S. Sirianunpiboon, Y. Wu, A. R. Calderbank and S. D. Howard, ``Fast Optimal Decoding of Multiplexed Orthogonal Designs by Conditional Optimization,'' \emph{IEEE Trans. Inf. Theory}, vol.\ 56, no.\ 3, pp.\ 1106-1113, Mar. 2010.

\bibitem{fgd1}
 T. P. Ren, Y. L. Guan, and C. Yuen, ``Fast-group-decodable space-time block code,'' in Proc. of \emph{ITW 2010}, Cairo, Egypt, 2010.

\bibitem{fgd2}
 T. Ren, Y. Guan, C. Yuen, and E. Zhang, ``Space-time codes with block-orthogonal structure and their simplified ML and near-ML decoding,,'' in Proc. of \emph{VTC 2010-Fall}, Sep. 2010.

\bibitem{fgd3}
T. P. Ren, Y. L. Guan, C. Yuen, and E. Y. Zhang, ``Block-orthogonal space-time codes with decoding complexity reduction,'' in Proc. of \emph{SPAWC}, 2010.

\bibitem{Oggier_fast}
F. Oggier, R. Vehkalahti and C. Hollanti, ``Fast-decodable MIDO Codes
from Crossed Product Algebras,'' in Proc. of \emph{IEEE ISIT 2010}, Austin, Texas, Jun. 2010.

\bibitem{Oggier_spcom}
F. Oggier, C. Hollanti and R. Vehkalahti, ``An algebraic MIDO-MISO code construction,'' in Proc. of the \emph{Intl. conf. on signal
process. and commun. (SPCOM 2010)}, Bangalore, India, Jul. 2010.

\bibitem{damen_info}
M. O. Damen, A. Tewfik, and J. C. Belfiore, ``A construction of a space time
code based on number theory,'' \emph{IEEE Trans. Inf. Theory}, vol.\ 48,
no.\ 3, pp.\ 753-760, Mar. 2002.

\bibitem{tirk_expansion}
R. Kashaev and O. Tirkkonen, ``On expansion of MIMO mutual information in SNR,'' in Proc. of \emph{ISIT 2002}, Lausanne, Switzerland, Jun. 30 - Jul. 8, 2002.

\bibitem{hollanti1}
C. Hollanti and H. F. Lu, ``Construction methods for asymmetric and multi-block space-time codes,'' \emph{IEEE Trans. Inf. Theory}, vol. 55, no. 3, pp. 1086 – 1103, Mar. 2009.

\bibitem{hollanti2}
 H. F. Lu and C. Hollanti, ``Optimal diversity multiplexing tradeoff and code constructions of constrained asymmetric MIMO systems,'' \emph{IEEE Trans. Inf. Theory}, vol. 56, no. 5, pp. 2121-2129, May 2010.

\bibitem{ORBV}
F. Oggier, G. Rekaya, J. C. Belfiore and E. Viterbo, ``Perfect space time block codes,'' \emph{IEEE Trans. Inf. Theory,} vol.\ 52, no.\ 9, pp.\ 3885-3902, Sep. 2006.

\bibitem{new_per}
P. Elia, B. A. Sethuraman and P. V. Kumar, ``Perfect Space-Time Codes for Any Number of Antennas,'' \emph{IEEE Trans. Inf. Theory}, vol. 53 , no.\ 11, pp. 3853-3868, Nov. 2007.

\bibitem{4gp1}
D. N. Dao, C. Yuen, C. Tellambura, Y. L. Guan, and T. T. Tjhung, ``Four-group decodable space-time block codes," \emph{IEEE Trans. Signal Process.}, vol.\ 56, no.\ 1, pp.\ 424-430, Jan. 2008.

\bibitem{4gp2}
G. S. Rajan and B. S. Rajan, ``Multi-group ML Decodable Collocated and Distributed Space Time Block Codes,''  \emph{IEEE Trans. Inf. Theory}, vol.\ 56, no.\ 7, pp.\ 3221-3247, Jul. 2010.

\bibitem{sanjay}
S. Karmakar and B. S. Rajan, ``Multigroup-Decodable STBCs from Clifford Algebras,''
\emph{IEEE Trans. Inf. Theory}, vol.\ 55, no.\ 1, pp.\ 223-231, Jan.\ 2009.

\bibitem{HaH}
B. Hassibi and B. Hochwald, ``High-rate codes that are
linear in space and time,'' {\it IEEE Trans. Inf. Theory}, vol.\ 48, no.\ 7, pp.\ 1804-1824, July 2002.

\bibitem{tel}
I. E. Telatar, ``Capacity of multi-antenna Gaussian channels,'' \emph{Eur. Trans. Telecom.}, vol.\ 10, pp. \ 585
595, Nov. 1999.

\bibitem{JJK}
J. K. Zhang, J. Liu, K. M. Wong, ``Trace-Orthonormal Full-Diversity Cyclotomic Space Time Codes,'' \emph{IEEE Trans. Signal Process.}, vol.\ 55, no.\ 2, pp.\ 618-630, Feb. 2007.

\bibitem{anti_matric}
D. B. Shapiro and R. Martin, ``Anticommuting Matrices,'' \emph{The American Mathematical Monthly}, vol.\ 105, no.\ 6, pp.\ 565-566, Jun. -Jul., 1998.

% \bibitem{TSC}
%  V. Tarokh, N. Seshadri and A. R. Calderbank,``Space time codes for high date rate wireless communication : performance criterion and code construction,''
% \emph{IEEE Trans. Inf. Theory}, vol.\ 44, no.\ 2, pp.\ 744 - 765, Mar. 1998.

\bibitem{full_div_rot}
http://www1.tlc.polito.it/~viterbo/rotations/rotations.html.

\bibitem{lakshmi}
L. P. Natarajan and B. S. Rajan,
``Asymptotically-Optimal, Fast-Decodable, Full-Diversity STBCs,'' available online at arXiv, arXiv:1003.2606v2, Aug. 20, 2010.

\bibitem{sphere_decoding}
E. Viterbo and J. Boutros, ``Universal lattice code decoder for fading channels,'' \emph{IEEE Trans. Inform theory.}, vol.\ 45, no.\ 5, pp. \ 1639-1642, Jul. 1999.

\bibitem{barry}
M. O. Sinnokrot, J. R. Barry and V. K. Madisetti, ``Embedded Alamouti Space-Time Codes for High Rate and Low Decoding Complexity,'' \emph{IEEE Asilomar 2008}.

\end{thebibliography}
\end{document}